\newbox\ProofSym
\newtheorem{theorem}{Theorem} 
\newtheorem{lemma}[theorem]{Lemma}
\newtheorem{corollary}[theorem]{Corollary}
\begin{document}
\pagenumbering{arabic}
\title{Algorithms for Computing Maximum Cliques in Hyperbolic Random Graphs{\footnote{This work was supported by the National Research Foundation of Korea(NRF) grant funded by the Korea government(MSIT) (No.RS-2023-00209069)}}}

%
%

\author{
Eunjin Oh\footnote{Pohang University of Science and Technology,	Korea.
		Email: {\tt{eunjin.oh@postech.ac.kr}}}
\and
Seunghyeok Oh\footnote{Pohang University of Science and Technology,	Korea.
		Email: {\tt{seunghyeokoh@postech.ac.kr}}}}

\maketitle

\begin{abstract}

In this paper, we study the maximum clique problem on hyperbolic random graphs. 
A hyperbolic random graph is a mathematical model for analyzing scale-free networks since it effectively explains the power-law degree distribution of scale-free networks. 
We propose a simple algorithm for finding a maximum clique in hyperbolic random graph. 
We first analyze the running time of our algorithm theoretically. 
We can compute a maximum clique on a hyperbolic random graph $G$ in $O(m + n^{4.5(1-\alpha)})$ expected time 
if a geometric representation is given 
or in $O(m + n^{6(1-\alpha)})$ expected time if a geometric representation is not given, where $n$ and $m$ denote the numbers of vertices and edges of $G$, respectively, and $\alpha$ denotes a parameter controlling the power-law exponent of the degree distribution of $G$.
Also, we implemented and evaluated our algorithm empirically.
Our algorithm outperforms the previous algorithm [BFK18] practically and theoretically. 
Beyond the hyperbolic random graphs, we have experiment on real-world networks. 
For most of instances, we get large cliques close to the optimum solutions efficiently.

\end{abstract} 
\newpage

\section{Introduction}
Designing algorithms for analyzing large real-world networks
such as social networks, World Wide Web, or
biological networks 
is a fundamental problem in computer science that has attracted considerable attention in the last decades. 
To deal with this problem 
from the theoretical point of view, 
it is required to define a mathematical model for real-world networks.
For this purpose, several models have been proposed. Those models are required to 
replicate the salient features of real-world networks. 
One of the most salient features of real-world networks is \emph{scale-free}. 
In general, a graph is considered as a scale-free network if its diameter is small, one connected component has large size, it has subgraphs with large edge density, and most importantly, \emph{its degree distribution follows a power law}.
Here, for an integer $k\leq n$, let $P(k)$ be the fraction of nodes having degree exactly $k$.
If $P(k) \sim k^{-\beta}$, we say that the degree distribution of the graph follows a power law.
In this case, $\beta$ is called the \emph{power-law exponent}. 

One promising model for scale-free real-world networks is
the hyperbolic random graph model. 
A hyperbolic random graph is constructed from two parameters.
First, points in the hyperbolic plane are chosen from a certain distribution depending on the parameters.
Then we consider the points as the vertices of the constructed hyperbolic random graph.
For two vertices whose distance is at most a certain threshold, we add an edge between them. For illustration, see Figure~\ref{fig:hyperbolic}. 
It is known that the degree distribution of a hyperbolic random graph follows a  power-law~\cite{gugelmann2012random}. 
Moreover, its diameter is small with high probability~\cite{friedrich2018diameter,kiwi2014bound}, and it has a giant connected component~\cite{bode2015largest,doi:10.1137/18M121201X}. 
Including these results, the structural properties of hyperbolic random graphs have been
studied extensively~\cite{candellero2014clustering}. 
However, only a few algorithmic results are known. 
In other words, the previous work focuses on \emph{why} 
we can use hyperbolic random graphs as a promising model,
but only a few work focuses on \emph{how} to use this model for solving real-world problems. 
We focus on the latter type of problems. 

In this paper, we focus on the maximum clique problem for hyperbolic random graphs from  theoretical and practical point of view. 
The maximum clique problem asks for a maximum-cardinality set of pairwise adjacent vertices. 
For general graphs, this problem is NP-hard. Moreover, it is W[1]-hard when it is parameterized by the solution size, and it is APX-hard even for cubic graphs~\cite{alimonti2000some}. Therefore, the theoretical study on the clique problem focuses on special classes of graphs. 
In fact, this problem can be solved in polynomial time for special classes of graphs such as
planar graphs, unit disk graphs and hyperbolic random graphs~\cite{blasius2018cliques,CLARK1990165,espenant2023finding}.
More specifically, the algorithm by Bl\"asius et al.~\cite{blasius2018cliques} for  computing a maximum clique of a hyperbolic random graph takes $O(mn^{2.5})$ \emph{worst-case} time. The randomness in the choice of vertices is not considered in the analysis of the algorithm.
One natural question here is to design an algorithm for this problem with improved \emph{expected} time.

To analyze our algorithm, we use the \emph{average-case analysis}.
A traditional modeling choice in the analysis of algorithms is \emph{worst-case analysis}, where the performance of an algorithm is measured by the worst performance over all possible inputs.
Although it is a useful framework in the analysis of algorithms,  
it does not take into account the distribution of inputs that an algorithm is likely to encounter in practice. 
It is possible that an algorithm performs well on most inputs, but poorly on a small number of inputs that are rarely encountered in practice. In this case, the worst-case analysis can mislead the analysis of algorithms. The field of ''beyond worst-case analysis''
studies ways for overcoming these limitations~\cite{roughgarden2021beyond}. 
One simple technique studied in this field is the average-case analysis. 
As the hyperbolic random graph model intrinsically defines an input distribution, 
the average-case analysis is a natural model for analyzing algorithms for hyperbolic random graphs. 

\subparagraph{Previous Work.} 
While the structural properties of hyperbolic random graphs has been studied extensively, only a few algorithmic problems have been studied.
The most extensively studied algorithmic problem on hyperbolic random graphs
is the generation problem:
Given parameters, the goal is to generate a hyperbolic random graph efficiently. 
The best-known algorithms run in expected linear time~\cite{blasius2022efficiently} and in worst-case subquadratic time~\cite{von2015generating}. 
Also, Bl\"asius et al.~\cite{blasius2018efficient} studied the problem of 
embedding a scale-free network into the hyperbolic plane and presented heuristics for this problem. 

Very recently, classical algorithmic problems such as shortest path problems, the maximum clique problem and the independent set problem have been studied.
These problems can  be solved significantly faster in hyperbolic random graphs. 
More specifically, given a hyperbolic random graph, the shortest path between any two vertices can be computed in sublinear expected time~\cite{blasius2022efficient, blasius2021efficiently}.
A hyperbolic random graph admits a sublinear-sized balanced separator with high probability~\cite{blasius2016hyperbolic}. 
As applications, Bl\"asius et al.~\cite{blasius2016hyperbolic} showed that 
the independent set problem admits a PTAS for hyperbolic random graphs,
and the maximum matching problem admits a subquadratic-time algorithm. 
Also, the clique problem can be solved in polynomial time for hyperbolic random graphs 
in the worst case~\cite{blasius2018cliques}. 

The clique problem has been studied extensively because it has numerous applications in various field
such as community search in social networks, team formation in expert networks, gene expression and motif discovery in bioinformatics and anomaly detection in complex networks~\cite{lu2017finding}.
From a theoretical perspective, the best-known exact algorithm runs in 
$2^{0.276n}$ time in~\cite{robson1986algorithms}. 
However, it is not sufficiently fast for massive real-world networks,
leading to the proposal of 
lots of exact algorithms and heuristics for this problem on real-world networks~\cite{abello2002massive,lu2017finding,pattabiraman2013fast,rossi2014fast}. 
While these algorithms and heuristics work efficiently in practice, there is no theoretical guarantee of their efficiency. 

\subparagraph{Our results.}
We present algorithms for computing a maximum clique in a hyperbolic random graph and analyze their performances theoretically and empirically.

Given a hyperbolic random graph with parameters $\alpha$ and $C$ together with its geometric representation, we can compute 
a maximum clique in $O(m+n^{4.5(1-\alpha)})$ expected time, where $n$ and $m$ denotes the numbers of vertices and edges of the given graph. Here, we have $1/2<\alpha<1$, and the O-notation hides a constant depending on $\alpha$ and $C$. 
With high probability, our algorithm outperforms the previously best-known algorithm by Bl\"asius et al.~\cite{blasius2018cliques} running in $O(mn^{2.5})$ time. 
In the case that a geometric representation is not given,  
our algorithm runs in $O(m+n^{6(1-\alpha)})$ expected time.
This is the first algorithms for the maximum clique problem
on hyperbolic random graphs not using geometric representations. 

Also, we implemented our algorithms
and analyzed it empirically. We run our algorithms on both
synthetic data (hyperbolic random graphs) and real-world data. 
For hyperbolic random graphs, since it is proved that our algorithm computes a maximum clique correctly, we focus on the efficiency of the algorithms. 
We observed that our algorithms perform efficiently; it takes about 100ms for $n=10^6$. For real-world networks,
our algorithm gives a lower bound on the optimal solution. 
We observed that our algorithm performs well especially for collaboration networks and web networks. These are typical scale-free real-world networks.

\section{Preliminaries}
Let $\mathbb{H}^2$ be the hyperbolic plane with curvature $-1$. We can handle hyperbolic planes with arbitrary (negative) curvatures by rescaling other model parameters which will be defined later. Thus it suffices to deal with the hyperbolic plane with curvature $-1$. 
Since the hyperbolic plane is isotropic, we choose an arbitrary point $o$ and consider it as the origin of $\mathbb{H}^2$. Also, we fix a half-line $\ell_o$ from $o$ going towards an arbitrary point, say $w$, as the \emph{axis}. Then we can represent a point $v$ of $\mathbb{H}^2$ as $(r_v, \phi_v)$ where $r_v$ is the hyperbolic distance between $v$ and $o$, and $\phi_v$ is the angle from $\ell_o$ to the half-line from $o$ going towards $v$.
We call $r_v$ and $\phi_v$ the \emph{radial} and
\emph{angular coordinates} of $v$. 

For any two points $x$ and $y$ in $\mathbb{H}^2$,
we use $d(x,y)$ to denote the distance in $\mathbb{H}^2$ between $x$ and $y$. 
Then we have the following. 
\begin{align*}
    d(u, v) &= \cosh^{-1} (\cosh (r_u) \cosh (r_v) - \sinh(r_u) \sinh(r_v)\cos(\Delta \phi_{u,v})\\
    &\leq\cosh^{-1} (\cosh (r_u) \cosh (r_v)+ \sinh(r_u) \sinh(r_v)), 
\end{align*}
where $\Delta \phi_{u,v}$ denotes the small relative angle between $v$ and $u$~\cite{anderson2006hyperbolic}. 
For a point $x\in\mathbb{H}^2$ and a value $r\geq 0$, 
we use $B_x(r)$ to denote 
the disk centered at $x$ with radius $r$.
That is, 
$B_x(r)=\{v\in\mathbb{H}^2 \mid d(v,x)\leq r\}$. 

\begin{figure}
    \centering
    \includegraphics[width=0.8\linewidth]{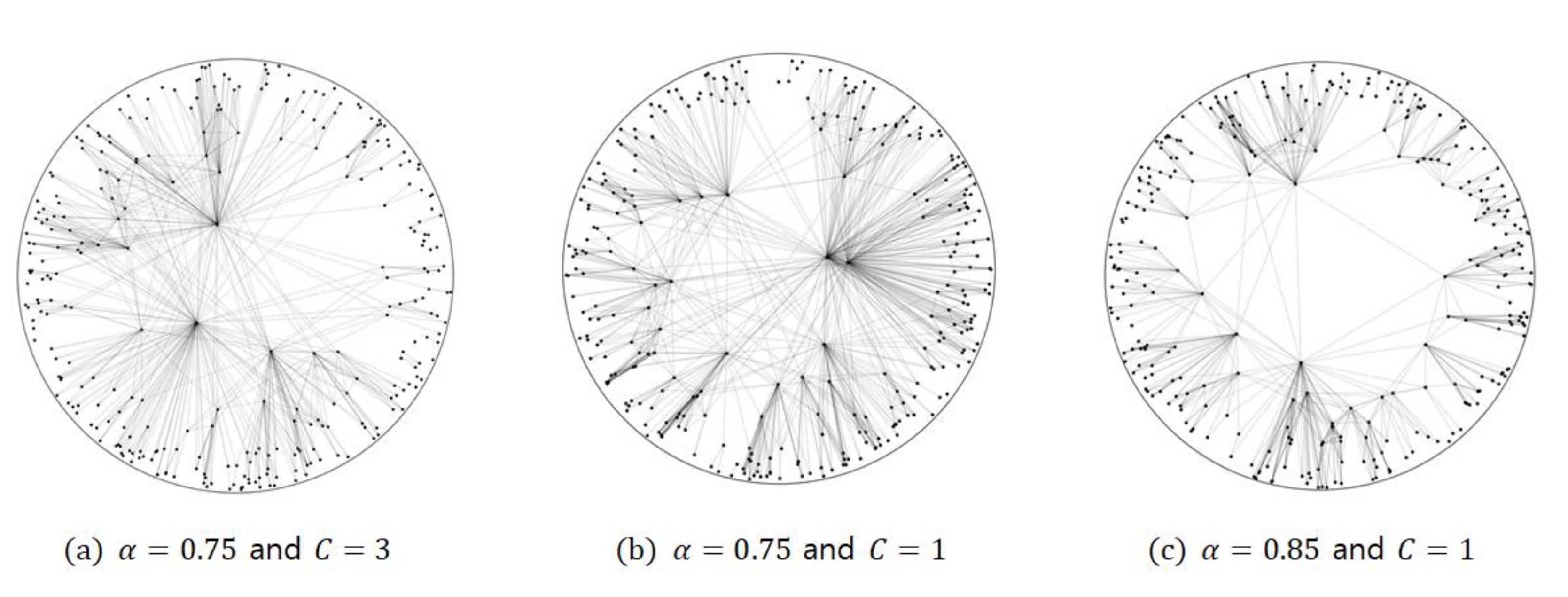}
    \caption{
    HRGs with different parameters. 
    As $C$ gets larger, the average degree gets larger (See (a--b)).
    As $\alpha$ gets larger, the points gets closer to the boundary of $D$ (See (b--c)).
    }
    \label{fig:hyperbolic}
    \vspace{-0.5em}
\end{figure}

\subsection{Hyperbolic Random Graphs}
A \emph{hyperbolic unit disk graph (HRG)}  is a graph whose vertices are placed on $\mathbb{H}^2$, and two vertices are connected by an edge if the distance between them on $\mathbb{H}^2$ is at most some threshold. This threshold is called the \emph{radius threshold} of the graph. This is the same as the unit disk graph except that the hyperbolic unit disk graph is defined on $\mathbb{H}^2$ while the unit disk graph is defined on the Euclidean plane. 

In this paper, we focus on the hyperbolic \emph{random} graph model
introduced by Papadopoulos et al.~\cite{papadopoulos2010greedy}.
It is a family $\{\mathcal G_{n, \alpha, C}\}$ of distributions, indexed by the number $n$ of vertices, a parameter $C$ adjusting the average degree of a graph, and a parameter $\alpha$ determining the power-law exponent. 
A sample from $\{\mathcal G_{n, \alpha, C}\}$ is a 
hyperbolic unit disk graph on $n$ points (vertices) chosen independently as follows. 
Let $D$ be the disk centered at $o$ with radius $R=2\ln n+C$. 
To pick a point $v$ in $D$, we first sample its radius $r_v$, and then sample its angular coordinate $\phi_v$.
The probability density $\rho(r)$ for the radial coordinate $r_v$ is defined as 
\begin{align}\label{eqn:rho}
    \rho(r) &= \frac{\alpha\sinh(\alpha r)}{\cosh(\alpha R)-1}.
\end{align}
Then the angular coordinate is sampled uniformly from $[0,2\pi)$. 
In this way, we can sample one vertex with respect to parameters $\alpha$ and $C$,
and by choosing $n$ vertices independently and by computing the hyperbolic 
unit disk graph with radius threshold $R$ on the $n$ vertices, 
we can obtain a sample from the distribution  $\{\mathcal G_{n, \alpha, C}\}$. 

An intuition behind the definition of $\rho(r)$ is as follows.
To choose a point $v$ uniformly at random in $D$, we first choose its angular coordinate uniformly at random in $[0,2\pi)$ as we did for $\mathcal{G}_{n,\alpha,C}$, and choose its radial  coordinate according to the distribution with density function $\rho(r)=\frac{\sinh(r)}{\cosh(\alpha R)-1}$. But in this case, the power-law exponent of a graph is fixed.
To add the flexibility to the model, the authors of~\cite{papadopoulos2010greedy} introduced a parameter $\alpha$ and defined 
the density function as in~\eqref{eqn:rho}. Here, 
For $\alpha<1$, this favors points closer to the center of $D$, while for $\alpha>1$, this favors points closer to the boundary of $D$. For $\alpha=1$, this corresponds to the uniform distribution~\cite{gugelmann2012random}.
For illustration, see Figure~\ref{fig:hyperbolic}. 

\subparagraph{Properties of HRGs.}
Let $\mu(S)$ be the probability measure of a set $S\subseteq D$, that is,
\begin{align*}
    \mu(S) &= \int_{x\in S}  \frac{\alpha\sinh(\alpha r_x)}{2\pi \cosh(\alpha R)-1} dx.
\end{align*}
For a vertex $v$ of a hyperbolic random graph $G$ with $n$ vertices,
the expected degree of $v$ in $G$ is $n\cdot \mu(B_v(R)\cap D)$ by construction. 
Moreover, notice that $\mu(B_v(R)\cap D)=\mu(B_{v'}(R)\cap D)$ for any two vertices $v, v'$ with $r_v=r_{v'}$. Thus to make the description easier, we let $B_r(r')$ denote $B_{(r,0)}(r')$ if it is clear from the context. Note that $D=B_0(R)$. 
For theoretical analyses of our algorithm, we analyze the probability measures for different sets using Lemma~\ref{lem:expected-degree}.


\begin{lemma}[{\cite{blasius2018cliques}}]\label{lem:expected-degree}
For any $0 \leq r, r' \leq R$, we have
\begin{align*}
    \mu (B_0(r)) &= e^{-\alpha(R-r)} (1 - \mathcal{E}_r)&\\ 
    \mu (B_r(R) \cap B_0(R)) &= \frac{2\alpha e^{-r/2}}{\pi(\alpha - 1/2)} (1 - \mathcal{E}_r')&\\
    \mu (B_{r}(R) \cap B_0(r)) &= 
    \begin{cases}
    \frac{2\alpha}{\pi (\alpha - 1/2)} e^{-R(\alpha-1/2) + r(\alpha-1)} (1 - o(1))& \text{if } r > R/2,\\
    \mu(B_0(r)) & \text{if } r \leq 
    R/2,
    \end{cases}
\end{align*}
where $\mathcal{E}_r$ and $\mathcal{E}_r'$ are error terms which can be negligible (that is, $O(1/n)$) if $r=\Omega(R)$. 
\end{lemma}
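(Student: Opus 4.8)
\medskip
\noindent\textit{Proof sketch.} All three estimates reduce to integrating the radial density $\rho$ against the angular ``width'' of the relevant region, so I would organize the argument around an explicit angle function. For the first identity, rotational symmetry of $B_0(r)$ makes the integral one-dimensional:
\[
\mu(B_0(r)) \;=\; \int_0^r \rho(s)\,ds \;=\; \frac{\cosh(\alpha r)-1}{\cosh(\alpha R)-1}.
\]
Applying $\cosh t-1=\tfrac12 e^{t}(1-e^{-t})^2$ to numerator and denominator rewrites this as $e^{-\alpha(R-r)}\cdot\frac{(1-e^{-\alpha r})^2}{(1-e^{-\alpha R})^2}$, giving the claimed form with $\mathcal{E}_r=1-\frac{(1-e^{-\alpha r})^2}{(1-e^{-\alpha R})^2}=O(e^{-\alpha r})$, which is $O(1/n)$ once $\alpha r\ge\ln n$, i.e. once $r=\Omega(R)$ with a large enough constant.

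For the second and third identities the key object is the maximal relative angle $\theta_r(s)$ such that a point with radial coordinate $s$ lying within relative angle $\theta_r(s)$ of $v=(r,0)$ belongs to $B_v(R)$. Hyperbolic trigonometry gives $\theta_r(s)=\pi$ whenever $r+s\le R$ (the farthest point of the circle of radius $s$ is at distance $r+s$ from $v$), and otherwise $\cos\theta_r(s)=\frac{\cosh r\cosh s-\cosh R}{\sinh r\sinh s}$. Expanding $\cosh$ and $\sinh$ for large arguments yields $\cos\theta_r(s)=1-2e^{R-r-s}(1+o(1))$, hence $\theta_r(s)=2e^{(R-r-s)/2}(1+o(1))$ in the regime $r+s-R\to\infty$. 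Writing $\mu(B_r(R)\cap B_0(R))=\frac1\pi\int_0^R\rho(s)\,\theta_r(s)\,ds$ and splitting at $s=R-r$: the part $s\le R-r$ contributes $\mu(B_0(R-r))=e^{-\alpha r}(1-o(1))$, and for $s>R-r$ I would substitute $\rho(s)=\alpha e^{\alpha(s-R)}(1+o(1))$ and the estimate for $\theta_r$ to get
\[
\frac1\pi\int_{R-r}^{R}\alpha e^{\alpha(s-R)}\cdot 2e^{(R-r-s)/2}\,ds
=\frac{2\alpha}{\pi}\,e^{-\alpha R+R/2-r/2}\int_{R-r}^{R}e^{(\alpha-1/2)s}\,ds.
\]
Because $\alpha>1/2$, this integral is dominated by its upper endpoint and evaluates to $\tfrac{1}{\alpha-1/2}e^{(\alpha-1/2)R}(1-e^{-(\alpha-1/2)r})$, giving $\frac{2\alpha e^{-r/2}}{\pi(\alpha-1/2)}(1-o(1))$; this dominates the leftover $e^{-\alpha r}$ since $\alpha>1/2$, which yields the second identity with $\mathcal{E}_r'=O(e^{-(\alpha-1/2)r})$.

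The third identity is the same computation with the extra constraint $s\le r$ on the radial coordinate of the second point. If $r\le R/2$ then any $u$ with $r_u\le r$ has $d(v,u)\le r+r_u\le R$, so $B_0(r)\subseteq B_v(R)$ and the measure is exactly $\mu(B_0(r))$. If $r>R/2$, I would run the split-at-$(R-r)$ integral with upper limit $r$ in place of $R$; now $\int_{R-r}^{r}e^{(\alpha-1/2)s}\,ds$ is dominated by the $s=r$ endpoint (since $r>R-r$), and collecting exponents gives $-\alpha R+R/2-r/2+(\alpha-1/2)r=-R(\alpha-1/2)+r(\alpha-1)$, hence the stated $\frac{2\alpha}{\pi(\alpha-1/2)}e^{-R(\alpha-1/2)+r(\alpha-1)}(1-o(1))$, which dominates the $\mu(B_0(R-r))=e^{-\alpha r}(1-o(1))$ contribution precisely because $r>R/2$.

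The routine part of this plan is the bookkeeping of the exponential integrals. The two places that need care are: (a) making $\theta_r(s)=2e^{(R-r-s)/2}(1+o(1))$ quantitative, including the regime where $s$ is close to $R-r$ so that $\cos\theta_r(s)$ is near $-1$ and the small-angle expansion of $\arccos$ is invalid --- this regime must be shown to contribute only a negligible fraction of each integral; and (b) verifying that every accumulated $(1+o(1))$ is in fact $1+O(1/n)$ once $r=\Omega(R)$, which forces one to track the constants hidden in $\Omega(\cdot)$ and to observe that the small-$s$ tail of each integral, where $\rho(s)\approx\alpha e^{\alpha(s-R)}$ is not yet accurate, is exponentially suppressed. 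Throughout, the hypothesis $\alpha>1/2$ is exactly what guarantees the relevant integrals are dominated by the endpoints used above.
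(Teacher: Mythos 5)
The paper does not actually prove this lemma --- it is imported verbatim from Bl\"asius et al.\ \cite{blasius2018cliques} (where it in turn follows Gugelmann et al.), so there is no in-paper argument to compare against. Your sketch reconstructs the standard proof from that literature, and it checks out: the closed form $\mu(B_0(r))=\frac{\cosh(\alpha r)-1}{\cosh(\alpha R)-1}$ and the rewriting via $\cosh t-1=\tfrac12 e^t(1-e^{-t})^2$ are exact; the angle function $\cos\theta_r(s)=\frac{\cosh r\cosh s-\cosh R}{\sinh r\sinh s}$, the expansion $\theta_r(s)=2e^{(R-r-s)/2}(1+o(1))$, and the split of $\frac1\pi\int\rho(s)\theta_r(s)\,ds$ at $s=R-r$ are precisely how the second and third identities are derived in the cited sources, and your exponent bookkeeping ($-\alpha R+R/2+(\alpha-1/2)R=0$ for the second identity, $-R(\alpha-1/2)+r(\alpha-1)$ for the third) is correct, as is the observation that the main term dominates the $e^{-\alpha r}$ leftover exactly when $r>R/2$ because the exponent gap is $(2\alpha-1)(r-R/2)$. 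The two caveats you flag yourself are the genuine ones: the small-angle expansion of $\arccos$ fails for $s$ near $R-r$ (but that window carries only $O(e^{-\alpha r})$ mass), and the error terms you obtain, $\mathcal{E}_r=O(e^{-\alpha r})$ and $\mathcal{E}_r'=O(e^{-(\alpha-1/2)r}+e^{-r})$, show that the lemma's claim of ``$O(1/n)$ whenever $r=\Omega(R)$'' silently requires the constant in $\Omega(\cdot)$ to be large enough relative to $\alpha-1/2$ --- a looseness in the statement as quoted, not a gap in your argument. In short: your proposal is a correct and appropriately self-aware sketch of the standard proof; it supplies an argument the paper deliberately omits.
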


Hyperbolic random graphs have all properties for being considered as scale-free networks mentioned above.
In particular, hyperbolic random graphs  with parameter $\alpha$ have power-law exponent $\beta$, 
where $\beta=2\alpha+1$ if $\alpha\geq 1/2$, and $\beta=2$ otherwise. 
Most real-world networks have a power-law exponent larger than two. Thus 
we assume that $\alpha>1/2$ in the paper. 

\subsection{Algorithms for the Maximum Clique Problem}
In this section, we review the algorithm for this problem on hyperbolic random graphs described  in~\cite{blasius2018cliques}, which is an extension of~\cite{CLARK1990165}. 
This algorithms requires geometric representations of hyperbolic random  graphs.
Let $G=(V,E)$ be a hyperbolic random  graph.  

For $\alpha \geq 1$, they 
showed that a hyperbolic random graph has $O(n)$ maximal cliques with high probability. Therefore, a maximum clique can be computed in linear time with high probability by just enumerating all the maximal cliques. 

For $\frac{1}{2} < \alpha < 1$, they showed that 
the algorithm in~\cite{CLARK1990165} can be extended to hyperbolic random graphs. 
Assume first that, for a maximum clique $K$, we have two vertices $u$ and $v$ of $K$ with maximum $r=d(u,v)$. 
Then all vertices in $K$ are contained in the region $R_{uv} = B_u(r) \cap B_v(r)$.  
Then we can compute $K$ by considering the vertices in $R_{uv}$ as follows. We partition $R_{uv}$ into $R_{uv}^1$ and $R_{uv}^2$ with respect to the line through $u$ and $v$. 
They showed that the diameter of $R_{uv}^1$ (and $R_{uv}^2$) is at most one, and thus 
$V\cap R_{uv}^1$ (and $V\cap R_{uv}^2)$) forms a clique.
Therefore, the subgraph $G_{uv}$ of $G$ induced by $V\cap R_{uv}$ is the complement of a bipartite graph  
with bipartition $(V\cap R_{uv}^1, V\cap R_{uv}^2)$. 
Moreover, $K$ is an independent set of the complement of $G_{uv}$. 
Therefore, it suffices to compute an independent set of the complement of $G_{uv}$, and we can do this in 
in $O(n^{2.5})$ time using the Hopcroft-Karp algorithm.
However, we are not given the edge $uv$ in advance. Thus we apply this procedure for every edge $uv$ of $G$, and then
take the largest clique as a solution. This takes  $\mathcal{O}(mn^{2.5})$ time in total. 


\medskip 
Throughout this paper, we use $\mathbb P[A]$ to denote the probability that
an event $A$ occurs. 
For a random variable $X$, we use $\mathbb E[X]$ to denote the 
expected value of $X$. 
We use the following form of Chernoff bound.
\begin{theorem}[{Chernoff Bound~\cite{dubhashi2009concentration}}]
Let $X_1, \ldots, X_n$ be independent random variables with $X_i \in \{ 0,1 \}$ and let $X = \Sigma_{i=1}^n X_i$.
Then we have 
\begin{align*}
\mathbb{P}[X>t] &\leq 2^{-t} && \text{for any constant } t > 2e\cdot \mathbb{E}[X] \text{ and} \\
\mathbb{P}[X<(1-\epsilon)\cdot \mathbb{E}[X]] &\leq e^{-\epsilon^2/2 \cdot \mathbb{E}[X]} && \text{for any constant } \epsilon \in (0, 1).
\end{align*}
\end{theorem}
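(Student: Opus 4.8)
The plan is to use the standard exponential-moment (Chernoff) method, treating the two tails separately. For the upper tail I would first apply Markov's inequality to $e^{\lambda X}$ for a parameter $\lambda>0$, giving $\mathbb{P}[X>t]\leq e^{-\lambda t}\,\mathbb{E}[e^{\lambda X}]$. Since the $X_i$ are independent, the moment generating function factors, $\mathbb{E}[e^{\lambda X}]=\prod_{i=1}^n\mathbb{E}[e^{\lambda X_i}]$, and for a $\{0,1\}$-valued $X_i$ with mean $p_i$ we have $\mathbb{E}[e^{\lambda X_i}]=1+p_i(e^\lambda-1)\leq \exp(p_i(e^\lambda-1))$ by the inequality $1+x\leq e^x$. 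Writing $\mu=\mathbb{E}[X]=\sum_i p_i$, this yields $\mathbb{P}[X>t]\leq \exp(\mu(e^\lambda-1)-\lambda t)$.

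The next step is to optimize over $\lambda$. The exponent is minimized at $e^\lambda=t/\mu$, which is a legitimate choice ($\lambda>0$) because the hypothesis $t>2e\mu$ forces $t>\mu$. Substituting gives $\mathbb{P}[X>t]\leq e^{-\mu}(e\mu/t)^t\leq (e\mu/t)^t$, and then the hypothesis $t>2e\mu$ gives $e\mu/t<1/2$, hence $(e\mu/t)^t<2^{-t}$, which is the claimed bound.

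For the lower tail I would run the symmetric argument with $e^{-\lambda X}$: Markov gives $\mathbb{P}[X<(1-\epsilon)\mu]\leq e^{\lambda(1-\epsilon)\mu}\,\mathbb{E}[e^{-\lambda X}]$, and the same factorization together with $1+p_i(e^{-\lambda}-1)\leq\exp(p_i(e^{-\lambda}-1))$ yields $\mathbb{P}[X<(1-\epsilon)\mu]\leq\exp(\mu(e^{-\lambda}-1)+\lambda(1-\epsilon)\mu)$. Choosing $\lambda=-\ln(1-\epsilon)>0$ minimizes the exponent and produces the bound $\left(e^{-\epsilon}/(1-\epsilon)^{1-\epsilon}\right)^{\mu}$.

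The only genuinely non-mechanical step is then the elementary inequality $e^{-\epsilon}/(1-\epsilon)^{1-\epsilon}\leq e^{-\epsilon^2/2}$ for $\epsilon\in(0,1)$, equivalently $(1-\epsilon)\ln(1-\epsilon)\geq-\epsilon+\epsilon^2/2$. I would verify this by setting $f(\epsilon)=(1-\epsilon)\ln(1-\epsilon)+\epsilon-\epsilon^2/2$, observing $f(0)=0$, and computing $f'(\epsilon)=-\ln(1-\epsilon)-\epsilon\geq0$ on $(0,1)$ since $\ln(1-\epsilon)\leq-\epsilon$; hence $f\geq0$, which closes the lower-tail bound. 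Since the statement is classical, a reader may alternatively just accept the citation to \cite{dubhashi2009concentration}; the argument above is included only for completeness.
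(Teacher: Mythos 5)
Your proof is correct: both tails follow from the standard exponential-moment argument, the optimization choices $e^{\lambda}=t/\mu$ and $e^{-\lambda}=1-\epsilon$ are the right ones, and the closing inequality $(1-\epsilon)\ln(1-\epsilon)\geq-\epsilon+\epsilon^2/2$ is verified soundly. The paper itself offers no proof of this statement --- it is imported as a black box from the cited reference --- so there is nothing to compare against; your derivation is the standard one and would stand on its own (the only cosmetic point is the degenerate case $\mathbb{E}[X]=0$, where $X=0$ almost surely and both bounds hold trivially, which you implicitly exclude when setting $e^{\lambda}=t/\mu$).
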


\section{Efficient Algorithm for the Maximum Clique Problem}\label{sec:nonrobust-algo}
In this section, we present an algorithm for the maximum clique problem on a hyperbolic random graph drawn from $\mathcal{R}_{n,\alpha,C}$ running in $\mathcal{O}\left(m + n^{4.5(1-\alpha)} \right)$ expected time. This algorithm correctly works for any hyperbolic unit disk graph, but its time bound is guaranteed only for hyperbolic random graphs. As we only deal with the case that $1/2<\alpha<1$, this algorithm is significantly faster than the algorithm in~\cite{blasius2018cliques}.

A main observation is the following. 
A clique of size $k$ consists of vertices of degree at least $k-1$. 
That is, to find a clique of size at least $k$, removing vertices with degree less than $k-1$ does not affect the solution. 
Thus once we have a lower bound, say $k$, on the size of a maximum clique, we can remove
all vertices of degree less than $k$.
Our strategy is to construct a sufficiently large clique (which is not necessarily maximum)
 as a preprocessing step 
so that we can remove a sufficiently large number of vertices of small degree. 
After applying a preprocessing step, we will see that 
the number of vertices we have decreases to $O(n^{1-\alpha})$ with high probability. 
Then we apply the algorithm in~\cite{blasius2018cliques} to the resulting graph.

\subsection{Computing a Sufficiently Large Clique Efficiently}\label{sec:initial}
In this section, we show how to compute a clique of size $\Omega(n^{1-\alpha})$ with probability $1-2^{-\Omega(n^{1-\alpha})}$. 
The algorithm is simple: Scan the vertices in the non-increasing order of their degrees, and maintain a clique $Q$, which is initially set as $\emptyset$. 
If the next vertex can be added to $Q$ to form a larger clique, then add it, otherwise exclude it. 
We can sort the vertices with respect to their degrees in $O(n+m)$ time using counting sort.
 Also, we can construct the clique $Q$ 
 in $O(n+m)$ time.
 In the following, we call $Q$ the \emph{initial clique}.

Now, we show that the size of the initial clique is $\Omega(n^{1-\alpha})$ with probability $1-2^{-\Omega(n^{1-\alpha})}$. 
First, we show that a sufficient large clique can be found by collecting all vertices in $B_0(R/2)$  with high probability in Lemma~\ref{lem:clique-size}. 

\begin{lemma}\label{lem:clique-size}
    For any constant $c'>0$, the vertices in $B_0(R/2-c')$ forms a clique of size $\Omega(n^{1-\alpha})$ with probability $1-2^{-\Omega(n^{1-\alpha})}$. 
\end{lemma}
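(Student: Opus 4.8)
The plan is to show two things: first, that any two points in $B_0(R/2-c')$ are within hyperbolic distance $R$ of each other (so the induced subgraph is a clique), and second, that the expected number of vertices landing in $B_0(R/2-c')$ is $\Omega(n^{1-\alpha})$, after which a Chernoff bound gives the claimed concentration. For the first part, I would use the distance bound quoted in the preliminaries: for $u,v$ with $r_u, r_v \le R/2-c'$ we have
\[
d(u,v) \le \cosh^{-1}\bigl(\cosh(r_u)\cosh(r_v) + \sinh(r_u)\sinh(r_v)\bigr) = r_u + r_v \le R - 2c' < R,
\]
using the identity $\cosh(a+b) = \cosh a\cosh b + \sinh a\sinh b$. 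Hence every pair of vertices in $B_0(R/2-c')$ is adjacent, so $V \cap B_0(R/2-c')$ is a clique; moreover this clique is a subset of the initial clique $Q$ up to the greedy selection, but for the lemma statement it suffices that $V\cap B_0(R/2-c')$ itself is a clique of the stated size.

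For the size bound, I would invoke Lemma~\ref{lem:expected-degree}: with $r = R/2 - c'$, since $r \le R/2$ we are in the second case and $\mu(B_0(R/2-c')) = e^{-\alpha(R - (R/2 - c'))}(1-\mathcal{E}_r) = e^{-\alpha(R/2 + c')}(1 - o(1))$. Plugging in $R = 2\ln n + C$ gives $e^{-\alpha R/2} = e^{-\alpha(\ln n + C/2)} = n^{-\alpha} e^{-\alpha C/2}$, so $\mu(B_0(R/2-c')) = \Theta(n^{-\alpha})$ where the hidden constant depends only on $\alpha$, $C$, and $c'$. Since the $n$ vertices are placed independently, the number $X$ of vertices in $B_0(R/2-c')$ is a sum of $n$ i.i.d. indicator variables with $\mathbb{E}[X] = n\cdot \mu(B_0(R/2-c')) = \Theta(n^{1-\alpha})$. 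Because $1/2 < \alpha < 1$, we have $\mathbb{E}[X] = \Theta(n^{1-\alpha}) \to \infty$, so the lower-tail Chernoff bound from the stated theorem (with, say, $\epsilon = 1/2$) yields $\mathbb{P}[X < \tfrac12\mathbb{E}[X]] \le e^{-\mathbb{E}[X]/8} = 2^{-\Omega(n^{1-\alpha})}$. Thus with probability $1 - 2^{-\Omega(n^{1-\alpha})}$ the set $V \cap B_0(R/2-c')$ has size at least $\tfrac12 \mathbb{E}[X] = \Omega(n^{1-\alpha})$, and we have just shown it is a clique.

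The only mild subtlety — and the part I would be most careful about — is the bookkeeping of the error term $\mathcal{E}_r$ in Lemma~\ref{lem:expected-degree}: that lemma only guarantees $\mathcal{E}_r = O(1/n)$ when $r = \Omega(R)$, and here $r = R/2 - c' = \Omega(R)$ indeed holds (the constant $c'$ is absorbed), so $\mathcal{E}_r$ is negligible and $\mu(B_0(R/2-c')) = e^{-\alpha(R/2+c')}(1 - O(1/n))$ as needed; I should state this check explicitly. A secondary point is that the Chernoff bound as quoted requires its parameter to be constant, whereas I am applying it with $\epsilon = 1/2$, which is fine, and the conclusion $e^{-\epsilon^2 \mathbb{E}[X]/2}$ is then genuinely $2^{-\Omega(n^{1-\alpha})}$ since $\mathbb{E}[X] = \Theta(n^{1-\alpha})$. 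No other obstacle arises; the argument is essentially a one-line geometric fact plus a direct substitution into the measure formula followed by a standard concentration step.
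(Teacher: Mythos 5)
Your proof is correct and follows essentially the same route as the paper's: establish pairwise adjacency of vertices in $B_0(R/2-c')$ via the hyperbolic distance bound, compute $\mathbb{E}[X]=n\mu(B_0(R/2-c'))=\Theta(n^{1-\alpha})$ from Lemma~\ref{lem:expected-degree}, and apply the lower-tail Chernoff bound. Your handling of the adjacency step (via $\cosh(a+b)=\cosh a\cosh b+\sinh a\sinh b$, giving $d(u,v)\le r_u+r_v\le R$) and of the error term $\mathcal{E}_r$ is if anything slightly more careful than the paper's.
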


\begin{proof}
    For any two vertices $u$ and $v$ in $B_0(R/2)$, 
    we have 
    \begin{align*}
        d(u, v) \leq\cosh^{-1} (\cosh^2 (R/2)) \leq \cosh^{-1}(\cosh R) = R.
    \end{align*}
    That is, $u$ and $v$ are connected by an edge.
    Therefore, the vertices in $B_0(R/2)$  form a clique. 

    Now we show that the number of vertices in $B_0(R/2-c')$ is at most $\Omega(n^{1-\alpha})$ with high probability.
    For this purpose, we define a random variable $X$ as the number of vertices in $B_0(R/2-c')$. Then by Lemma~\ref{lem:expected-degree}, we have 
    \begin{align*}
        \mathbb{E}[X]  = n \mu ( B_0 (R/2-c') ) = ne^{-\frac{1}{2}\alpha (R+c')} (1-o(1)).
    \end{align*}
    Applying $R = 2\ln n + C$, we have 
    \begin{align*}
        \mathbb{E}[X] &= n^{1-\alpha} e^{-\frac{1}{2}\alpha (C+c')} (1-o(1)) \geq cn^{1-\alpha} 
    \end{align*}
    for some constant $c$.
    By Chernoff bound, we have
    \begin{align*}
        \mathbb{P}[X < (1-\epsilon)cn^{1-\alpha}] \leq e^{-\epsilon^2/2 \cdot cn^{1-\alpha}} = 2^{-\Omega(n^{1-\alpha})}
    \end{align*}
    for any constant $\epsilon \in (0, 1)$.
\end{proof}

Thus, we can get desired clique by choosing $\Omega(n^{1-\alpha})$ vertices in $B_0(R/2)$ with high probability. 
However, as we scan the vertices in the decreasing 
 order of their degrees, 
 this does not immediately imply that the size of the initial clique is $\Omega(n^{1-\alpha})$. 
In the following lemma, we show that the initial clique has the claimed size by showing that
the $\Omega(n^{1-\alpha})$ vertices with highest degrees are 
contained in $B_0(R/2)$ with high probability.

\begin{lemma}\label{lem:initial}
    The initial clique has size  
    $\Omega(n^{1-\alpha})$ with probability $1-2^{-\Omega(n^{1-\alpha})}$.
\end{lemma}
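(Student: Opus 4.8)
The plan is to show that the $\Theta(n^{1-\alpha})$ vertices of highest degree in $G$ all lie inside $B_0(R/2)$ with probability $1 - 2^{-\Omega(n^{1-\alpha})}$; since those vertices lie in a common ball of radius $R/2$ they form a clique (as shown in Lemma~\ref{lem:clique-size}), and the greedy degree-ordering procedure, when fed this prefix of the sorted vertex list, will build a clique of size at least the size of that prefix-clique. More precisely, the greedy algorithm scans vertices in non-increasing degree order; if the first $t = \Theta(n^{1-\alpha})$ vertices it examines are pairwise adjacent, then $Q$ has size at least $t$ after processing them, and can only grow afterwards. So it suffices to produce a threshold $\tau$ such that (i) the number of vertices with degree at least $\tau$ is $\Omega(n^{1-\alpha})$, and (ii) every vertex with degree at least $\tau$ lies in $B_0(R/2)$, both with the stated probability.

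The key quantitative tool is Lemma~\ref{lem:expected-degree}. The expected degree of a vertex at radius $r$ is $n \cdot \mu(B_r(R) \cap B_0(R)) = \frac{2\alpha}{\pi(\alpha-1/2)} n e^{-r/2}(1 - \mathcal{E}_r')$, which is monotonically decreasing in $r$. I would pick a constant $c'>0$ and set the threshold at the \emph{expected} degree of a vertex sitting at radius $R/2 - c'$, call it $\tau$; plugging $R = 2\ln n + C$ gives $\tau = \Theta(n^{1-\alpha} e^{c'/2})$ up to constants, i.e.\ $\tau = \Theta(n^{1-\alpha})$, but with a tunable multiplicative constant that grows with $c'$. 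First I would handle (i): by Lemma~\ref{lem:clique-size} there are $\Omega(n^{1-\alpha})$ vertices in $B_0(R/2 - 2c')$ (say) with the required probability, and each such vertex has expected degree comfortably above $\tau$ (because its radius is at most $R/2 - 2c'$, so its expected degree is at least $\Theta(n^{1-\alpha} e^{c'})$, a constant factor above $\tau$); applying the lower-tail Chernoff bound to the degree of each such vertex and a union bound over the $O(n)$ vertices shows that, with probability $1 - 2^{-\Omega(n^{1-\alpha})}$, all of them actually have degree $\ge \tau$. For (ii) I would argue the contrapositive: a vertex $v$ with $r_v > R/2$ has expected degree $n\mu(B_{r_v}(R)\cap B_0(R)) \le \frac{2\alpha}{\pi(\alpha-1/2)} n e^{-R/4}(1+o(1)) = \Theta(n^{1/2})$, and more importantly any vertex with $r_v \ge R/2 - c'/2$ has expected degree at most $\Theta(n^{1-\alpha} e^{c'/4})$, which is a constant factor \emph{below} $\tau$ once $c'$ is chosen large enough; the upper-tail Chernoff bound then says such a vertex has degree $\ge \tau$ with probability at most $2^{-\tau} = 2^{-\Omega(n^{1-\alpha})}$, and a union bound over the $O(n)$ vertices kills all of them simultaneously.

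Combining the two events: with probability $1 - 2^{-\Omega(n^{1-\alpha})}$, every vertex of degree $\ge \tau$ lies in $B_0(R/2 - c'/2) \subseteq B_0(R/2)$, and there are $\Omega(n^{1-\alpha})$ such vertices. The former implies these high-degree vertices are pairwise adjacent, hence the greedy algorithm — which processes them first, before any vertex of degree $<\tau$ — accumulates all of them into $Q$, so $|Q| = \Omega(n^{1-\alpha})$. The main obstacle is a bookkeeping one rather than a conceptual one: I must choose the constant $c'$ (and the gap between the ``inner'' radius $R/2 - 2c'$ used for the lower bound and the ``outer'' radius $R/2 - c'/2$ used for the upper bound) so that the expected-degree threshold $\tau$ is strictly sandwiched — a constant factor below every inner vertex's expected degree and a constant factor above every outer vertex's expected degree — so that both Chernoff tails are genuinely exponentially small in $n^{1-\alpha}$; this requires care because all three quantities scale like $n^{1-\alpha}$ and only the (exponential-in-$c'$) constants separate them, and one must also verify the error terms $\mathcal{E}_r', o(1)$ from Lemma~\ref{lem:expected-degree} are harmless in the relevant radius range, which they are since $r = \Omega(R)$ throughout.
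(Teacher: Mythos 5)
Your proposal is correct and follows essentially the same route as the paper: fix a degree threshold that separates the vertices inside $B_0(R/2-\mathrm{const})$ from those at larger radius, establish the separation by upper- and lower-tail Chernoff bounds plus a union bound over all vertices, and conclude that the greedy scan adds every inner vertex (these form a clique of size $\Omega(n^{1-\alpha})$ by Lemma~\ref{lem:clique-size}) before reaching any vertex outside $B_0(R/2)$. One small correction: the expected degree at radius $R/2-c'$ is $\Theta(\sqrt{n}\,e^{c'/2})$, not $\Theta(n^{1-\alpha}e^{c'/2})$ --- the exponent $\alpha$ enters $\mu(B_0(r))$, not $\mu(B_r(R)\cap B_0(R))$ --- but this slip is harmless since only the ratios between the three expected degrees drive the separation, and the resulting tails $2^{-\Omega(\sqrt{n})}$ are even stronger than the claimed $2^{-\Omega(n^{1-\alpha})}$.
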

\begin{proof}
    First, we show that no vertex lying outside of $B_0(R/2)$ has degree greater than $2ec_1\sqrt{n}$ with high probability for some constant $c_1$ which will be defined later.
    For this purpose, we analyze the expected degree of 
    a vertex $v$ whose radius coordinate $r$ is larger than $R/2$. 
    Let $X$ be the degree of $v$. Then we have 
    \begin{align*}
        \mathbb{E}[X] &= n\mu(B_{r}(R) \cap B_0(R))
        = n\frac{2\alpha e^{-r/2}}{\pi(\alpha-1/2)}(1-o(1)) 
        \leq n\frac{2\alpha e^{-R/4}}{\pi(\alpha-1/2)}(1-o(1)). 
    \end{align*}
    for $r=R/2$.
    By applying $R = 2\ln n + C$, we have 
    \begin{align*}
        \mathbb{E}[X] \leq c_1\sqrt{n} 
    \end{align*}
    for $c_1 = \frac{2\alpha e^{-C/4}}{\pi(\alpha-1/2)}$.
    By Chernoff bound, we have 
    \begin{align*}
        \mathbb{P}[X > 2ec_1\sqrt{n}] \leq 2^{-2ec_1\sqrt{n}}.
    \end{align*}
    
    Therefore, 
    the probability that all vertices lying outside of $B_0(R/2)$ have degree larger than $2ec_1\sqrt{n}$  is at most $2^{-2ec_1\sqrt{n}}\times n = 2^{-\Omega(\sqrt{n})}$. 
    In other words, there is no vertex of degree greater than $2e c_1 \sqrt{n}$ in $r \geq \frac{R}{2}$ with probability $1 - 2^{-\Omega(\sqrt{n})}$.

    Now we show that 
    no vertex  in $B_0(R/2-c_2)$ has degree smaller than $2ec_1\sqrt{n}$ with high probability
    for some constant $c_2$ which will be specified later.  
    For this purpose, we analyze the probability that
    a vertex $v$  in $B_0(R/2-c_2)$ has degree smaller than $2ec_1\sqrt{n}$. 
    Let $Y$ be the degree of $v$.  
    Then we have 
    \begin{align*}
        \mathbb{E}[Y] &= n\mu(B_{r}(R) \cap B_0(R)) 
        = n\frac{2\alpha e^{-r/2}}{\pi(\alpha-1/2)}(1-o(1))
        \geq n\frac{2\alpha e^{-R/4}e^{c_2/2}}{\pi(\alpha-1/2)}(1-o(1))\\
        &> \sqrt{n} \cdot 
        \frac{2\alpha e^{-C/4}e^{c_2/2}}{\pi(\alpha-1/2)}\cdot\frac{1}{2}\\
        &> e^{c_2/4} c_1\sqrt{n}.
    \end{align*}
     Here, $c_1$ is the constant we used for the previous case. By Chernoff bound, we have 
    \begin{align*}
        \mathbb{P}[Y < (1-\epsilon)e^{c_2/4}c_1\sqrt{n}] \leq 2^{-\epsilon^2/2 \cdot e^{c_2/4}c_1\sqrt{n}} 
    \end{align*}
    for any constant $\epsilon \in (0, 1)$.
    By choosing $c_2 = 4 + 4\ln \frac{2}{1-\epsilon}$, we get 
    \begin{align*}
        \mathbb{P}[Y < 2ec_1\sqrt{n}] \leq 2^{-\frac{e\epsilon^2}{1-\epsilon}c_1\sqrt{n}}.
    \end{align*}
    Therefore, 
    the probability that all vertices contained in $B_0(R/2-c_2)$ have degree smaller than $2ec_1\sqrt{n}$  is at most $2^{-\frac{e\epsilon^2}{1-\epsilon}c_1\sqrt{n}} \times n = 2^{-\Omega(\sqrt{n})}$. 
    In other words, there is no vertex of degree smaller than $2e c_1 \sqrt{n}$ in $r \geq \frac{R}{2}$ with probability $1 - 2^{-\Omega(\sqrt{n})}$.
    

    To construct the initial solution, we scan the vertices in the decreasing order of their degrees. Therefore,
    with probability $1 - 2^{-\Omega(\sqrt{n})}$,
    we consider all vertices in $B_0(R/2-c_2)$ before considering
    any vertex lying outside of $B_0(R/2)$. 
    Therefore, the initial clique contains all vertices in $B_0(R/2-c_2)$ with high probability. 
    By Lemma~\ref{lem:clique-size}, the initial clique has size at least $\Omega(n^{1-\alpha})$ with high probability.
\end{proof}

\subsection{Removing All Vertices of Small Degree}
\label{sec:degree-reduction}
In this section, we show how to remove a sufficiently large number of vertices, 
and show that the size of the remaining graph is $O(n^{1-\alpha})$ with high probability. 
This algorithm is also simple: 
given the initial clique of size $k$, 
we repeatedly delete all vertices of degree smaller than $k$. 
We call the resulting graph the \emph{kernel}. 
Then no vertex in the kernel has degree smaller than $k$ at the end of the process.
This process can be implemented in linear time as follows:
maintain the queue of vertices of degree smaller than $k$, and maintain the degree of each vertex. Then remove the vertices in the queue in order. Whenever a vertex $v$ is removed, update 
the degree of each neighbor $w$ of $v$ and insert $w$ to the queue if its degree gets smaller than $k$.

We show that the kernel has size  $O(n^{1-\alpha})$. 
Notice that we do not specify the order of vertices we consider during the deletion process. 
Fortunately, the kernel size remains the same regardless of the choice of deletion ordering. 

\begin{lemma}\label{lem:unique}
    In any order of deleting vertices, we can get a unique kernel.
\end{lemma}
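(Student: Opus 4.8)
The statement is a classic "confluence" claim: the deletion process — repeatedly removing vertices of current degree less than $k$ — terminates at the same graph no matter which eligible vertex we remove first. The plan is to characterize the kernel intrinsically, independent of any ordering. I would define a vertex set $S \subseteq V$ to be \emph{$k$-robust} if every vertex of the induced subgraph $G[S]$ has degree at least $k$ within $G[S]$ (equivalently, $S$ induces a subgraph of minimum degree $\geq k$). I claim the kernel produced by any deletion ordering equals $K^\ast$, the union of all $k$-robust subsets of $V$ — which is itself $k$-robust and hence the unique maximal such set.

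First I would check that $K^\ast$ is well-defined and $k$-robust: if $S_1$ and $S_2$ are both $k$-robust, then so is $S_1 \cup S_2$, since any vertex $v \in S_i$ keeps all of its $\geq k$ neighbors from $S_i$ inside $S_1 \cup S_2$; hence the family of $k$-robust sets is closed under union and has a unique maximal element $K^\ast$ (taking $K^\ast = \emptyset$ if no nonempty $k$-robust set exists). Next, the key invariant: during \emph{any} run of the deletion process, no vertex of $K^\ast$ is ever deleted. I would prove this by induction on the sequence of deletions. Suppose the first vertex of $K^\ast$ to be deleted is $v$, deleted at some step when the current vertex set is $W$ with $K^\ast \subseteq W$ (true before step $1$, and maintained by the induction hypothesis up to just before $v$'s deletion). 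Then $v$ is deleted because its degree in $G[W]$ is $< k$; but $v$'s neighbors inside $K^\ast$ — of which there are at least $k$, since $K^\ast$ is $k$-robust — are all still present in $W$, so $\deg_{G[W]}(v) \geq k$, a contradiction. Hence the final graph always contains $K^\ast$.

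Finally, the reverse inclusion: the terminal graph is contained in $K^\ast$. When the process halts at vertex set $T$, every vertex of $G[T]$ has degree $\geq k$ in $G[T]$ by the stopping condition, so $T$ is itself $k$-robust, and therefore $T \subseteq K^\ast$ by maximality of $K^\ast$. Combining the two inclusions, every deletion ordering yields exactly $T = K^\ast$, which proves uniqueness. I do not expect a serious obstacle here; the only point requiring a little care is phrasing the invariant argument so that the induction is on the deletion steps of a fixed but arbitrary run, and making sure the base case ($K^\ast \subseteq V$) and the "first violating deletion" framing are stated cleanly. The degenerate case $K^\ast = \emptyset$ (the whole graph collapses) is handled automatically by the same argument.
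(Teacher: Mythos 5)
Your proof is correct, but it takes a genuinely different route from the paper's. The paper proves uniqueness by directly comparing two arbitrary runs: it lets $V_a$ and $V_b$ be the sets of vertices deleted by two different orderings and shows $V_b \subseteq V_a$ by induction along the $b$-sequence (if $b_{i+1} \notin V_a$, then since degrees only decrease under deletion, $b_{i+1}$ would still have degree $< k$ in $G[V\setminus V_a]$, contradicting that the $a$-run terminated), then concludes $V_a = V_b$ by symmetry. You instead characterize the limit object intrinsically: the kernel is $K^\ast$, the unique maximal vertex set inducing a subgraph of minimum degree $\geq k$ (the $k$-core), obtained as the union of all such sets, and every run is sandwiched between $K^\ast \subseteq T$ (no vertex of $K^\ast$ is ever deletable, by a first-violation argument) and $T \subseteq K^\ast$ (the terminal set is itself $k$-robust). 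Both arguments hinge on the same monotonicity fact --- removing vertices never increases degrees --- but your version buys an order-free description of what the kernel \emph{is}, which is independently useful (e.g., it immediately implies the kernel contains every clique of size $k+1$), whereas the paper's pairwise mutual-containment induction is slightly more elementary in that it introduces no auxiliary object. Your handling of the degenerate case $K^\ast = \emptyset$ and of closure under arbitrary unions is fine.
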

\begin{proof}
    Let $G=(V,E)$ be a given graph. 
    Let $(a_1, a_2, \ldots, a_p )$
    and $(b_1, b_2, \ldots, b_q)$  be two different sequences of vertices
    deleted during the deletion process. 
    We first show that a \emph{set} $V_a = \{ a_1, a_2, \ldots, a_n \}$ 
    contains $V_b = \{ b_1, b_2, \ldots, b_m \}$. 
    If $b_1 \notin V_a$, the degree of $b_1$ in $G[V \setminus V_a]$ is less than $k$ 
    because removing vertices does not increase the degree of $b_1$, where
    $G[V']$ denotes the subgraph of $G$ induced by a vertex set $V'$. 
    That is, 
    if $b_1\in V\setminus V_a$, the deletion process would not terminate
    for $V_a$, and thus we conclude that $b_1 \in V_a$. 
    Inductively, assume that $\{b_1, b_2, \ldots b_i \} \subseteq V_a$.
    If $b_{i+1} \notin V_a$, the degree of $b_{i+1}$ in $G[V \setminus V_a]$ is at most the degree of $b_{i+1}$ in $G[V \setminus \{ b_1, b_2, \ldots, b_i \}]$,
    and thus 
    the degree of $b_{i+1}$ in $G[V \setminus V_a]$ is less than $k$.
    Then the deletion process would not terminate for $V_a$, and thus 
    $\{b_1, b_2, \ldots b_{i+1} \} \subseteq V_a$.
    Therefore, we have $V_b \subseteq V_a$. 
    Symmetrically, we can also show that $V_a \subseteq V_b$, and thus $V_a = V_b$.
    Therefore, the any order of deleting vertices gives the same remaining vertices.
\end{proof}

Because of the uniqueness of the kernel, 
for analysis, 
we may fix a specific deletion ordering and slightly modify the deletion process as follows.
Imagine that we scan the vertices in the decreasing order of their radial coordinates.
If the degree of a current vertex (in the remaining graph) is at least the size of the initial clique size, then we terminate the 
deletion process.
Otherwise, we delete the current vertex, and consider the next vertex.
By Lemma~\ref{lem:unique}, the number of remaining vertices is at least the size of the kernel. 
In the following lemma, we analyze the number of remaining vertices.

\begin{lemma}\label{lem:kernel-size}
    Given an initial solution of size $\Omega(n^{1-\alpha})$, then the size of the kernel is $O(n^{1-\alpha})$ with probability $1-2^{-\Omega(n^{1-\alpha})}$.
\end{lemma}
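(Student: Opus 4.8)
The plan is to bound the number of remaining vertices using the modified deletion process: scan vertices in decreasing order of radial coordinate $r$, and stop at the first vertex whose degree in the current (remaining) graph is at least $k = \Omega(n^{1-\alpha})$. By Lemma~\ref{lem:unique} the count of vertices surviving this process equals the kernel size, so it suffices to show the process stops after deleting all but $O(n^{1-\alpha})$ vertices. First I would identify a radial threshold $r^*$ such that any vertex with $r_v \le r^*$ is guaranteed, with high probability, to still have degree at least $k$ no matter which vertices of larger radius have been removed; then the process stops no later than reaching radius $r^*$, and the kernel is contained in $B_0(r^*)$ together with the (few) vertices of radius $> r^*$ examined before stopping. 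The natural choice is $r^* = R - 2\ln k + O(1)$, i.e.\ roughly $R/2 + \frac{\alpha-1}{1}\cdot(\text{something})$; more precisely, since $\mu(B_r(R)\cap B_0(R)) = \Theta(e^{-r/2})$, a vertex at radius $r$ has expected degree $\Theta(n e^{-r/2})$, so to make this $\omega(k) = \omega(n^{1-\alpha})$ we want $e^{-r/2} = \omega(n^{-\alpha})$, i.e.\ $r < 2\alpha \ln n + O(1) = \alpha R + O(1)$. So set $r^* = \alpha R - c_3$ for a suitable constant $c_3$.

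The key steps, in order: (1) Show $\mathbb{E}[|V \cap B_0(r^*)|] = n\cdot\mu(B_0(r^*)) = \Theta(n\cdot e^{-\alpha(R-r^*)}) = \Theta(n\cdot e^{-\alpha c_3}) = \Theta(n)$ — wait, that is too big; instead I want the complementary bound: $|V \setminus B_0(r^*)| = n(1 - \mu(B_0(r^*)))$, and I must pick $r^*$ so that $\mu(B_0(r^*))$ is close to $1$, specifically $n(1-\mu(B_0(r^*))) = O(n^{1-\alpha})$, which by Lemma~\ref{lem:expected-degree} means $e^{-\alpha(R - r^*)} = 1 - O(n^{-\alpha})$, forcing $r^*$ essentially equal to $R$ minus a constant. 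That cannot be right either. The correct reading: the kernel lives among vertices of \emph{large} radius only if they retain high degree; low-radius vertices have high degree and are never deleted, but there are $\Theta(n)$ of them. So the kernel cannot have size $O(n^{1-\alpha})$ unless most low-radius vertices actually get deleted because their neighbors disappear. Hence the real argument must be global, not vertex-local.

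So the approach I would actually take: let $V_{\mathrm{high}}$ be the vertices with $r_v \le R/2$; by Lemma~\ref{lem:clique-size} these form a clique, and there are $\Theta(n^{1-\alpha})$ of them with high probability, but also $|V_{\mathrm{high}}| = \Theta(n^{1-\alpha})$ as an \emph{upper} bound since $\mathbb{E}|V_{\mathrm{high}}| = n\mu(B_0(R/2)) = \Theta(n^{1-\alpha})$ and Chernoff gives concentration. For a vertex $v$ with $r_v > R/2$ to survive in the kernel it must, at termination, have $\ge k$ neighbors among surviving vertices. I would argue that, with high probability, every such $v$ has $o(k)$ neighbors inside $B_0(R/2)$ \emph{unless} $r_v$ is small — precisely, $\mathbb{E}[\deg(v, B_0(R/2))] = n\mu(B_r(R)\cap B_0(R/2)) = \Theta(n e^{-R(\alpha - 1/2) + r(\alpha-1)})$ by Lemma~\ref{lem:expected-degree}, which is $\Theta(n^{1-\alpha})\cdot\Theta(e^{r(\alpha-1)})$ — decreasing in $r$ since $\alpha<1$ — so it is $\le k/4$ once $r \ge R/2 + c_4$ for an appropriate constant $c_4$. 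Combined with a Chernoff/union bound over all vertices, with probability $1 - 2^{-\Omega(n^{1-\alpha})}$ no vertex with $r_v \ge R/2 + c_4$ has more than $k/2$ neighbors in $B_0(R/2 + c_4)$; and by the same computation the neighbors of $v$ lying \emph{outside} $B_0(R/2+c_4)$ also number $O(k)$ in expectation with concentration. Feeding these bounds into the deletion process: every vertex at radius $\ge R/2 + c_4$ ends up with degree $< k$ once all vertices at radius $> R/2 + c_4$ that it is adjacent to are handled — an induction on decreasing radius, identical in spirit to the proof of Lemma~\ref{lem:initial}. Therefore the kernel is contained in $V \cap B_0(R/2 + c_4)$, whose expected size is $n\mu(B_0(R/2+c_4)) = \Theta(n^{1-\alpha})$, and Chernoff's upper tail ($\mathbb{P}[X > t] \le 2^{-t}$ for $t > 2e\,\mathbb{E}[X]$) gives size $O(n^{1-\alpha})$ with probability $1 - 2^{-\Omega(n^{1-\alpha})}$.

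The main obstacle is making the induction on decreasing radius rigorous: one must show that when vertex $v$ (with $r_v \ge R/2 + c_4$) is examined, \emph{all} of its neighbors of larger radius have already been deleted, so that its current degree is exactly its number of neighbors in $B_0(R/2+c_4)$, which we have bounded by $< k$. This requires that the high-probability degree bound holds \emph{simultaneously} for all relevant vertices and all relevant sub-neighborhoods, so the union bound must be set up carefully — there are $n$ vertices but each bad event has probability $2^{-\Omega(n^{1-\alpha})}$ or $2^{-\Omega(\sqrt n)}$, which dominates $n$, so this goes through. A secondary subtlety is handling edges between two vertices of radius $> R/2+c_4$: these must be shown not to rescue a vertex, again via the $e^{-r/2}$ decay of $\mu(B_r(R)\cap B_0(R))$ restricted to the large-radius annulus. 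I expect the calculation $n\mu(B_r(R)\cap B_0(R/2+c_4))$ and the choice of $c_4$ to be the only genuinely new computation; the rest mirrors Lemmas~\ref{lem:clique-size} and~\ref{lem:initial}.
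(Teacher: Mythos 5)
Your final approach is correct and is essentially the paper's own argument: fix a threshold radius $r' = R/2 + O(1)$ at which the expected number of smaller-radius neighbors (the paper's ``inner degree,'' given directly by $\mu(B_0(r)\cap B_r(R))$ in Lemma~\ref{lem:expected-degree} and decreasing in $r$ for $\alpha<1$) falls below a suitable fraction of $k$, apply Chernoff plus a union bound so that the decreasing-radial-order deletion process (justified by Lemma~\ref{lem:unique}) removes every vertex beyond $r'$, and then bound $|V\cap B_0(r')|$ by $O(n^{1-\alpha})$ with another Chernoff bound. The only cosmetic differences are that the paper defines $r'$ implicitly by $n\mu(B_0(r')\cap B_{r'}(R)) = \frac{c_3}{2e}n^{1-\alpha}$ and bounds the inner degree in a single step via $\mu(B_0(r)\cap B_r(R))$, rather than splitting a vertex's surviving neighbors into those inside and outside $B_0(R/2+c_4)$.
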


\begin{proof}
    Let $c_3$ be a sufficiently small constant such that the size of the initial solution is at least $c_3 n^{1-\alpha}$.
    Let $r'$ be the radius satisfying $n\mu (B_0(r') \cap B_{r'}(R)) = \frac{c_3}{2e} n^{1-\alpha}$.
    By choosing $c_3$ as a sufficient small constant, we may assume that $r'\geq R/2$. 
    In the following, we show that all the vertices with radial coordinates larger than $r'$ are removed with probability $1 - 2^{-\Omega(n^{1-\alpha})}$. 
    Observe that $\mu(B_0(r)\cap B_r(R))$ is decreasing for $r\geq R/2$. That is, 
    for $r > r'$,
    \begin{align*}
        \mu(B_0(r) \cap B_{r}(R)) &= \frac{2\alpha}{\pi(\alpha-1/2)} e^{-R(\alpha - 1/2) + r(\alpha -1)} \cdot (1-o(1))\\
        &\leq \frac{2\alpha}{\pi(\alpha-1/2)} e^{-R(\alpha - 1/2) + r'(\alpha -1)} \cdot (1- o(1))\\
        &\leq \mu(B_0(r') \cap B_{r'}(R)).
    \end{align*}
    For a vertex $v$, we define the \emph{inner degree} of $v$ as the number of its neighboring vertices whose radial coordinates are smaller than the radial coordinate of $v$.
    Let $X$ be the inner degree of $v$. Then the expected inner degree of a vertex
    with radial coordinate $r \geq r'$ is as follows.
    \begin{align*}
        \mathbb{E}[X] &\geq n \mu(B_0(r') \cap B_{r'}(R)) = \frac{c_3}{2e} n^{1-\alpha}.
    \end{align*}
    By Chernoff bound,
    \begin{align*}
        \mathbb{P}[X > c_3 n^{1-\alpha}] \leq 2^{-c_3 n^{1-\alpha}}.
    \end{align*}
    Therefore, for a vertex with radial coordinate larger than $r'$,
    the probability that its inner degree is larger than the size of the initial clique
    is at most $2^{-c_3 n^{1-\alpha}}$. 
    By the union bound over at most $n$ vertices with radial coordinates larger than $r'$, 
    the probability that no vertex with radial coordinate larger than $r'$ has inner degree larger than 
    than the size of the initial clique is at most $n2^{-c_3 n^{1-\alpha}}=2^{-\Omega(n^{1-\alpha})}$.
    In other words, with probability $1 - n2^{-c_3 n^{1-\alpha}} = 1-2^{-\Omega(n^{1-\alpha})}$,
    all vertices with radial coordinates larger than $r'$ have inner degree larger than the size
    of the initial clique. 

    If this event happens, we remove all vertices with coordinates larger than $r'$. 
    Therefore, it suffices to show that the number of vertices with coordinates at most $r'$ is small.
    The expected number of such vertices is $n\mu(B_0(r'))$.
    For this, we obtain the explicit formula for $r'$ as follows.
    From $n\mu (B_0(r') \cap B_{r'}(R)) = c_2 n^{1-\alpha}$, we get 
    \begin{align*}
        n\mu(B_0(r') \cap B_{r'}(R)) &= \frac{2\alpha}{\pi(\alpha-1/2)} ne^{-R(\alpha - 1/2) + r'(\alpha -1)} \cdot (1-o(1))\\
        &= \frac{c_3}{2e} n^{1-\alpha}.
    \end{align*}
    Applying $R = 2\ln n + C$, we obtain
    \begin{align*}
        e^{r'} = \left( \frac{c_3}{2e}\frac{\pi(\alpha-1/2)}{2\alpha}e^{C(\alpha-1/2)} \right)^{\frac{1}{\alpha-1}} n (1 + o(1)).
    \end{align*}
    Let $Y$ be the number of vertices lying inside $B_0(r')$. 
    Then the expected number of vertices we have after applying the deletion process is as follows. 
    \begin{align*}
        \mathbb{E}[Y] &= n\mu(B_0(r')) = ne^{-\alpha(R-r')}\\
        &= ne^{-\alpha(2\ln n + C)} \left( \frac{c_3}{2e}\frac{\pi(\alpha-1/2)}{2\alpha}e^{C(\alpha-1/2)} \right)^{\frac{\alpha}{\alpha-1}} n^\alpha (1 + o(1)).
    \end{align*}
    Therefore, 
    \begin{align*}
        \mathbb{E}[Y] \leq c_4 n^{1-\alpha},
    \end{align*}
    for some constant $c_4$ depending only on $\alpha$ and $C$.
    By Chernoff bound, 
    \begin{align*}
        \mathbb{P}[Y > 2ec_4 n^{1-\alpha}] \leq 2^{-2ec_4 n^{1-\alpha}}. 
    \end{align*}

    In summary, 
    with probability $1-2^{-\Omega(n^{1-\alpha})}$,
    all vertices with radial coordinates larger than $r'$ have inner degree larger than the size
    of the initial clique.
    Also, with probability $1- 2^{-2ec_4 n^{1-\alpha}}$,
    the number of vertices in $B_0(r')$ is at most $2ec_4 n^{1-\alpha}$. 
    Therefore, the probability that 
    the number of remaining vertices after applying the deletion process is at most $2ec_4 n^{1-\alpha}$
    is at least $1-2^{-2ec_4 n^{1-\alpha}}- 2^{-\Omega(n^{1-\alpha})}$, which is
    $1-2^{-\Omega(n^{1-\alpha})}$.
\end{proof}

Although the deletion process we use for analysis requires the geometric representation of $G$,
the original deletion process does not require the geometric representation of $G$. 
By combining the argument in Section~\ref{sec:initial}
and Section~\ref{sec:degree-reduction}, we have the following theorem. 


\begin{theorem}\label{thm:expected}
    Given a graph drawn from $\mathcal G_{n,\alpha,C}$ with $\frac{1}{2} < \alpha < 1$ and its geometric representation, 
    we can compute its maximum clique  in $\mathcal{O}\left(m+ n^{4.5(1-\alpha)} \right)$ expected time. 
\end{theorem}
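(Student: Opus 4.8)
The plan is to assemble the three ingredients already proved—the linear-time construction of the initial clique, the uniqueness and linear-time computability of the kernel, and the bound $O(n^{1-\alpha})$ on the kernel size—and then bound the cost of running the Bl\"asius et al.\ algorithm on the kernel. First I would observe that sorting the vertices by degree with counting sort, greedily building the initial clique $Q$, and running the degree-peeling process to obtain the kernel all take $O(n+m)$ time, as argued in Sections~\ref{sec:initial} and~\ref{sec:degree-reduction}; moreover, by Lemma~\ref{lem:initial} the size $k$ of $Q$ is $\Omega(n^{1-\alpha})$ with probability $1-2^{-\Omega(n^{1-\alpha})}$. Since a clique of size $k+1$ contains only vertices of degree at least $k$, deleting vertices of degree less than $k$ never destroys a maximum clique (the maximum clique has size at least $k$, hence at least $k+1$ vertices of degree $\ge k$ — more carefully, a maximum clique of size $s\ge k$ survives because each of its vertices has degree at least $s-1\ge k-1$; one runs the peeling with threshold $k-1$, or simply notes $\mathrm{OPT}\ge |Q|$ so all of $Q$'s vertices and all clique vertices have degree $\ge |Q|-1$). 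So the kernel still contains a maximum clique of $G$.

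Next I would invoke Lemma~\ref{lem:kernel-size}: conditioned on $|Q|=\Omega(n^{1-\alpha})$, the kernel has $n' = O(n^{1-\alpha})$ vertices with probability $1-2^{-\Omega(n^{1-\alpha})}$. Now I apply the algorithm of Bl\"asius et al.~\cite{blasius2018cliques} reviewed in Section~2.3 to the kernel: for each edge $uv$ it solves a maximum independent set instance on the complement of a co-bipartite graph via Hopcroft--Karp in $O((n')^{2.5})$ time, for a total of $O(m' (n')^{2.5})$ where $m'$ is the number of edges of the kernel. Since $m' = O((n')^2) = O(n^{2(1-\alpha)})$, this is $O((n')^{4.5}) = O(n^{4.5(1-\alpha)})$. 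Combining, the total running time is $O(n+m) + O(n^{4.5(1-\alpha)}) = O(m + n^{4.5(1-\alpha)})$ on the high-probability event.

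Finally I would convert the high-probability bound into an expected-time bound. The failure event (either $|Q|$ too small, or the kernel too large) has probability $2^{-\Omega(n^{1-\alpha})}$; on that event the algorithm still runs correctly but I bound its running time by the trivial worst case $O(m n^{2.5}) = O(n^{4.5})$ (or any fixed polynomial bound for the whole procedure). Hence the expected running time is at most
\begin{align*}
O\!\left(m + n^{4.5(1-\alpha)}\right) + 2^{-\Omega(n^{1-\alpha})} \cdot O\!\left(n^{4.5}\right) = O\!\left(m + n^{4.5(1-\alpha)}\right),
\end{align*}
since $\alpha<1$ forces $n^{1-\alpha}\to\infty$, so $2^{-\Omega(n^{1-\alpha})}$ decays faster than any polynomial grows. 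Correctness holds deterministically: the algorithm always outputs a maximum clique because kernelization is exact and the Bl\"asius et al.\ subroutine is exact; only the time bound relies on the random model.

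The main obstacle, I expect, is not any single step but making the chaining of the probabilistic guarantees clean: Lemmas~\ref{lem:initial} and~\ref{lem:kernel-size} each hold with probability $1-2^{-\Omega(n^{1-\alpha})}$, and Lemma~\ref{lem:kernel-size} is stated \emph{conditioned} on having an initial solution of size $\Omega(n^{1-\alpha})$, so I must take a union bound over the two failure events and verify that the conditioning is handled correctly (e.g.\ by noting the kernel-size argument only uses a lower bound $c_3 n^{1-\alpha}$ on $|Q|$, which is exactly what Lemma~\ref{lem:initial} supplies). The other point needing a little care is the bound $m' = O((n')^2)$ used to pass from $O(m'(n')^{2.5})$ to $O(n^{4.5(1-\alpha)})$, which is immediate but should be stated, and the choice of a crude deterministic fallback time bound for the low-probability bad event.
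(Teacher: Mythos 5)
Your proposal is correct and follows essentially the same route as the paper's own proof: build the initial clique and kernel in linear time, invoke Lemmas~\ref{lem:initial} and~\ref{lem:kernel-size} to get a kernel of $O(n^{1-\alpha})$ vertices with probability $1-2^{-\Omega(n^{1-\alpha})}$, run the $O(m'n'^{2.5})$ algorithm of Bl\"asius et al.\ on the kernel with $m'=O(n'^2)$, and absorb the low-probability failure case via a crude $O(m+n^{4.5})$ fallback in the expectation. Your added care about the peeling threshold, the exactness of kernelization, and the union bound over the two failure events only makes explicit what the paper leaves implicit.
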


\begin{proof}
    We first show 
       its maximum clique can be computed in $\mathcal{O}\left(m+ n^{4.5(1-\alpha)} \right)$ time with probability at least
    $1-2^{-\Omega(n^{1-\alpha})}$. 
    We can compute the initial solution of size $\Omega(n^{1-\alpha})$ in $O(n+m)$ time with probability 
    $1-2^{-\Omega(n^{1-\alpha})}$. 
    Then we apply the algorithm in~\cite{blasius2018cliques}
    to the initial solution 
    for computing a maximum clique of a hyperbolic random graph
    with $n'$ vertices and $m'$ edges 
    in $O(m'n'^{2.5})$ time. Since $m'=n'^2$ in the worst case, 
    the total running time is $O(m+n^{4.5(1-\alpha)})$. 
    
    Even in the case that
    the reduction fails to delete a sufficient number of vertices, the running time of the algorithm is $O(m+n^{4.5})$. 
    Thus the expected running time is 
    \begin{align*}
        O(m+n^{4.5(1-\alpha)}) (1-2^{-\Omega(n^{1-\alpha})}) + O(m+n^{4.5}) 2^{-\Omega(n^{1-\alpha})} = O(m+n^{4.5(1-\alpha)}).\tag*{\qedhere}
    \end{align*}
\end{proof}

\section{Efficient Robust Algorithm for the Maximum Clique Problem}\label{sec:robust-algo}
In this section, we present the first algorithm for the maximum clique problem on hyperbolic random graphs which does not require geometric representations. 
In many cases, a geometric representation of a graph is not given. 
In particular, real-world networks such as social  and collaboration networks are not defined based on geometry although they share properties with HRGs.
As we want to use hyperbolic random graphs as a model for such real-world networks, it is necessary to design algorithms not requiring geometric representations. 

Our main key tool in this section is
the notion of \emph{co-bipartite neighborhood edge elimination ordering} (CNEEO) introduced by Raghavan and Spinrad~\cite{raghavan2003robust}.
It can be considered as a variant of a perfect elimination ordering. 
Let $G$ be an undirected graph. 
Let $L = (e_1, e_2 , \ldots e_m)$ be an edge ordering of all edges of $G$.
Let $G_L[i]$ be the subgraph of $G$ with the edge set $\{ e_i, e_{i+1} \ldots e_m \}$.
For a vertex $v\in V$, let 
$N_{G}(v)$ denote the set of neighbors of $v$ in $G$. 
Then $L$ is called a \emph{co-bipartite neighborhood edge elimination ordering} (CNEEO)  if for each edge $e_i = (u_i, v_i)$, 
the subgraph of $G$ induced by 
$N_{G_L[i]}(u_i) \cap N_{G_L[i]}(v_i)$ is co-bipartite. 
Here, a \emph{co-bipartite graph} is a graph whose complement is bipartite. 

Raghavan and Spinrad~\cite{raghavan2003robust} presented an algorithm for computing a CNEEO in polynomial time if a given graph admits a CNEEO.
Moreover, they presented a polynomial-time algorithm for computing a maximum clique in polynomial time assuming a CNEEO is given. 
We first show that a hyperbolic unit disk graph admits a CNEEO. This immediately leads to a polynomial-time algorithm for the maximum clique problem. 

\begin{lemma}\label{lem:cneeo}
    Every hyperbolic unit disk graph admits a CNEEO.   
\end{lemma}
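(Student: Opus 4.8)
\medskip
\noindent\textbf{Proof proposal.} The plan is to exhibit an explicit edge ordering and verify directly that it satisfies the CNEEO condition. Let $R$ be the radius threshold of the given hyperbolic unit disk graph $G=(V,E)$, and call $d(u,v)$ the \emph{length} of an edge $e=(u,v)$; every edge of $G$ has length at most $R$. I would take $L=(e_1,\dots,e_m)$ to be \emph{any} ordering of $E$ by non-increasing length, ties broken arbitrarily. Fix an index $i$, write $e_i=(u_i,v_i)$, put $r_i=d(u_i,v_i)\le R$, and let $W:=N_{G_L[i]}(u_i)\cap N_{G_L[i]}(v_i)$ be the vertex set in question. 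The first step is to confine $W$ to the \emph{small} lens $R_{u_iv_i}=B_{u_i}(r_i)\cap B_{v_i}(r_i)$: for any $w\in W$, both edges $(w,u_i)$ and $(w,v_i)$ lie in $G_L[i]$, hence occur at positions $\ge i$ in $L$ and therefore have length at most $r_i$; so $d(w,u_i)\le r_i$ and $d(w,v_i)\le r_i$, i.e.\ $w\in R_{u_iv_i}$.

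The second step reuses the geometry already recalled above from Bl\"asius et al.~\cite{blasius2018cliques}. Partition $R_{u_iv_i}$ into $R_{u_iv_i}^1$ and $R_{u_iv_i}^2$ by the geodesic through $u_i$ and $v_i$, and correspondingly split $W$ into $W_1=W\cap R_{u_iv_i}^1$ and $W_2=W\cap R_{u_iv_i}^2$ (a vertex lying on the geodesic may be placed in either part). Since each of $R_{u_iv_i}^1$ and $R_{u_iv_i}^2$ has diameter at most the radius threshold $R$, any two vertices of $W_j$ are within distance $R$ and hence adjacent in $G$; thus $W_1$ and $W_2$ are cliques of $G$. Consequently $G[W]$ is the union of the two cliques $W_1$ and $W_2$, its complement is bipartite with parts $W_1,W_2$, and so $G[W]$ is co-bipartite. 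As $i$ was arbitrary, $L$ is a CNEEO, which is the claim; combined with the algorithm of Raghavan and Spinrad~\cite{raghavan2003robust} it already gives a polynomial-time maximum-clique algorithm for hyperbolic unit disk graphs requiring no geometric representation.

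Essentially all the content sits in these two steps. The diameter bound on the halves $R_{u_iv_i}^1,R_{u_iv_i}^2$ is the one genuinely geometric ingredient and is where I expect the only difficulty, but it is exactly the fact established in~\cite{blasius2018cliques} for the correctness of the lens-based clique algorithm, so it can be quoted rather than reproved. The point of ordering the edges by length is precisely to make this geometry applicable: a common neighbor of $u_i$ and $v_i$ in the whole graph $G$ need only lie in the much larger region $B_{u_i}(R)\cap B_{v_i}(R)$, which does not decompose into two bounded-diameter pieces; processing long edges first guarantees that, by the time $e_i$ is handled, every surviving edge has length at most $r_i$, which is what forces the common neighborhood into the small lens $R_{u_iv_i}$.
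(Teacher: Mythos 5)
Your proposal is correct and follows essentially the same route as the paper's own proof: order the edges by non-increasing length, observe that every common neighbor of $u_i$ and $v_i$ in $G_L[i]$ then lies in the lens $B_{u_i}(r_i)\cap B_{v_i}(r_i)$, and invoke the co-bipartiteness of that lens (via the two bounded-diameter halves) from Bl\"asius et al.~\cite{blasius2018cliques}. You merely spell out the half-lens decomposition in more detail than the paper does.
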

\begin{proof}
    Let $G$ be a hyperbolic unit disk graph, and $L = (e_1, e_2, \ldots, e_m)$ be the ordered set of all edges in $G$ sorted in the non-increasing order of their lengths. 
    For each edge $e_i = (u, v)$, 
    $d(u, w) \leq d(u, v)$ and $d(v, w) \leq d(u, v)$ for all $w \in N_{G_L[i]}(u) \cap N_{G_L[i]}(v)$.
    Thus the vertices in $ N_{G_L[i]}(u) \cap N_{G_L[i]}(v)$ are contained in $R_{uv}=B_u(r) \cap B_v(r)$ with $r=d(u,v)$.
    Because the subgraph of $G$ induced by the vertices in $R_{uv}$ is co-bipartite, 
    the subgraph of $G$ induced by 
    $N_{G_L[i]}(u_i) \cap N_{G_L[i]}(v_i)$ is also co-bipartite. 
    Thus, $L$ is the CNEEO.
\end{proof}

Since Raghavan and Spinrad~\cite{raghavan2003robust} did not give an explicit time bound
on their algorithm, we describe their algorithm and analyze their running time in the following lemma. 
Note that the following lemma holds for an arbitrary graph (not necessarily a hyperbolic unit disk graph) admitting a CNEEO. 
\begin{lemma}[\cite{raghavan2003robust}]
    Given a graph $G = (V, E)$ which admits a CNEEO,
    the maximum clique problem can be solved in $O(m^3 + mn^{2.5})$ time.
\end{lemma}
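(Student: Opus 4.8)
The plan is to reconstruct the algorithm of Raghavan and Spinrad from the CNEEO and bound each of its phases. First I would recall the high-level structure: given a CNEEO $L=(e_1,\dots,e_m)$, for each edge $e_i=(u_i,v_i)$ the common neighborhood $W_i := N_{G_L[i]}(u_i)\cap N_{G_L[i]}(v_i)$ induces a co-bipartite graph, so the largest clique \emph{containing the edge $e_i$ and lying inside $\{u_i,v_i\}\cup W_i$ within $G_L[i]$} can be found by computing a maximum independent set in the bipartite complement — which, by König's theorem / Hopcroft--Karp, takes $O(|W_i|^{2.5})$ time once the bipartition of the complement is known. Iterating over all $m$ edges and taking the best clique found gives a maximum clique of $G$: any maximum clique $K$ has some edge that appears earliest in $L$ among the edges of $K$, and at that step all other vertices of $K$ are still present and lie in the common neighborhood, so $K$ is recovered at that iteration.

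Next I would account for the three cost components that sum to $O(m^3 + mn^{2.5})$. (i) Obtaining, for each edge $e_i$, the vertex set $W_i$ and the adjacency structure of $G_L[i]$ restricted to it: naively recomputing neighborhoods in the shrinking graph $G_L[i]$ costs $O(n+m)$ per edge for $O(m(n+m))=O(m^2)$ total, dominated by the next term. (ii) Testing co-bipartiteness of $G[W_i]$ and extracting the bipartition of its complement: the complement of an $s$-vertex graph has $O(s^2)$ edges, and checking bipartiteness (2-coloring) of the complement takes $O(s^2)$ time; summing $s^2 \le n^2$ over $m$ edges gives $O(mn^2)$. The $O(m^3)$ term arises because, in the worst case, one must also \emph{verify} that $L$ really is a CNEEO — or equivalently handle the bookkeeping of which edges remain — and a crude implementation that, for each edge, scans all remaining edges to rebuild $G_L[i]$ incurs $O(m)\cdot O(m)\cdot O(m) = O(m^3)$; I would present whichever bound the paper's implementation actually achieves and simply observe $m^3$ absorbs the weaker estimates. (iii) The Hopcroft--Karp calls: $\sum_i O(|W_i|^{2.5}) \le \sum_i O(n^{2.5}) = O(mn^{2.5})$.

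The main obstacle, and the step I would spend the most care on, is the correctness argument that the per-edge independent-set computation in the \emph{current} graph $G_L[i]$ — not in $G$ itself — still captures the global maximum clique. The subtlety is that $G[W_i]$ being co-bipartite is only guaranteed for the \emph{earliest-surviving} edge of a clique; for a clique $K$, letting $e_j$ be the $L$-minimal edge among $E(K)$, every vertex of $K\setminus\{u_j,v_j\}$ is still adjacent to both endpoints in $G_L[j]$ (since all edges of $K$ have index $\ge j$), hence lies in $W_j$, and $K$ is a clique in the co-bipartite graph $G[W_j]$, i.e.\ an independent set in its complement; so the maximum independent set found at step $j$ has size $\ge |K|-2$ and yields a clique of size $\ge |K|$. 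Conversely every clique the algorithm reports is genuinely a clique of $G$. I would phrase this carefully and note it is exactly the place where the CNEEO property (rather than mere co-bipartiteness of some fixed $R_{uv}$) is used.

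Finally I would assemble the pieces: correctness from the paragraph above, and running time $O(m^3) + O(mn^2) + O(mn^{2.5}) = O(m^3 + mn^{2.5})$, completing the proof. One clean-up point worth flagging is that the bipartition needed for Hopcroft--Karp comes for free from the 2-coloring of the complement computed during the co-bipartiteness test, so no extra cost is incurred there.
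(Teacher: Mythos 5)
Your treatment of the second phase (using the ordering to extract a maximum clique) is correct and in fact more careful than the paper's: the paper simply cites Raghavan and Spinrad for the fact that the largest of the per-edge cliques is a maximum clique of $G$, whereas you supply the argument via the $L$-minimal edge of $K$. Your accounting of the co-bipartiteness tests as $O(mn^2)\subseteq O(mn^{2.5})$ also works, and the observation that the bipartition comes free from the 2-coloring is fine.

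The genuine gap is in the first phase. The lemma's hypothesis is only that $G$ \emph{admits} a CNEEO --- no ordering is given --- so the algorithm must first \emph{construct} one, and that construction is precisely where the $O(m^3)$ term comes from. You start from ``given a CNEEO $L$'' and later hedge that the $m^3$ might be ``verification'' or ``bookkeeping,'' offering to ``present whichever bound the paper's implementation actually achieves''; that is not a proof of the stated bound. The paper's argument is: build the ordering greedily, at each of the $O(m)$ steps trying each of the $O(m)$ remaining candidate edges and testing in $O(n+m)$ time whether the common neighborhood $N$ of its endpoints in the current residual graph induces a co-bipartite subgraph (the $O(n+m)$ test uses the observation that $G[N]$ cannot be co-bipartite when $|E(G[N])|<(|N|-1)^2/2$, so either one rejects immediately from the edge count or $|N|^2=O(m)$ and the explicit check is cheap). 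This gives $O(m^2)$ per step and $O(m^3)$ overall. Two nontrivial ingredients are missing from your write-up: the greedy construction itself, and the fact --- due to Raghavan and Spinrad, and essential to correctness --- that the greedy procedure never gets stuck on a graph that admits a CNEEO, i.e.\ any greedily built prefix extends to a full CNEEO. Without these, you have only proved the lemma under the stronger hypothesis that a CNEEO is handed to you, which would yield a different (better) bound and does not match the statement.
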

\begin{proof}
    If $G$ has a CNEEO, we can compute it in a greedy fashion in $O(m^3)$ time. 
    Starting with an empty ordering, we add edges to the ordering one by one.
    At each step $i$, we have an ordering $(e_1,\ldots,e_{i-1})$, and we extend this ordering
    by adding an edge. Let $G_i$ be the subgraph of $G$ induced by $E-\{e_1,\ldots,e_{i-1}\}$.
    Then for each edge $e$, we check if the common neighbors of its endpoints in $G_i$
    induces a co-bipartite graph in $G$. If so, we add it at the end of the current ordering.
    If $G$ has a CNEEO, Raghavan and Spinrad~\cite{raghavan2003robust} showed that
    this greedy algorithm always returns a CNEEO. 
    For its analysis, observe that we have $O(m)$ steps.
    For each step $i$, the number of candidates for $e_i$ is $O(m)$, and
    for each candidate $e$, we can check if the common neighbors $N$ of its endpoints induces
    a co-bipartite graph in $O(n+m)$ time. To see this, 
    observe that $G[N]$ cannot be co-bipartite if $|E(G[N])| < \frac{(|N|-1)^2}{2}$.
    Thus if the number of edges of $G[N]$ is at least $\frac{(|N|-1)^2}{2}$,
    we can immediately conclude that $e$ cannot be the $i$th edge $e_i$ in the ordering. 
    Otherwise, we can check if $G[N]$ is co-bipartite in $O(N^2)$ time, which is at most $O(m)$ time. In this way, we can complete each step in $O(m^2)$ time, and thus the total running time
     is $O(m^3)$. 
    
    Then using the CNEEO, we can compute a maximum clique as follows.
    For each $i$, we compute a maximum clique of the subgraph of $G$ induced by $N_{G_L[i]}(u) \cap N_{G_L[i]}(v)$. Since it is co-bipartite, it is equivalent to computing a
    maximum independent set of its complement, which is bipartite.
    This takes $O(n^{2.5})$ time using the Hopcroft-Karp algorithm. 
     Raghavan and Spinrad~\cite{raghavan2003robust} showed that
      a largest clique among all $n$ such cliques is a maximum clique of $G$.    
    Thus, we can find maximum clique of $G$ in $O(mn^{2.5})$.
\end{proof}

In the case of hyperbolic \emph{random} graphs, we can solve the problem even faster.
As we did in Section~\ref{sec:nonrobust-algo},
we compute an initial clique,  remove
vertices of small degrees, and then obtain a kernel of small size.
Recall that these procedures do not require geometric representations. 
Note that the kernel is also a hyperbolic unit disk graph because we remove vertices only. 
The number of vertices of the kernel is $O(n^{1-\alpha})$ and the edges is $O(n^{2-2\alpha})$ in probability $1-2^{-\Omega(n^{1-\alpha})}$.
Thus, we have the following theorem.


\begin{corollary}\label{thm:robust}
    Given a graph drawn from $\mathcal G_{n,\alpha,C}$ with $\frac{1}{2} < \alpha < 1$, 
    we can compute a maximum clique in $\mathcal{O}\left(m + n^{6(1-\alpha)} \right)$ expected time  without its geometric representation.
\end{corollary}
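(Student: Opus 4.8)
The plan is to reuse, essentially verbatim, the geometry-free preprocessing of Section~\ref{sec:nonrobust-algo}, and then to replace the algorithm of~\cite{blasius2018cliques} by the CNEEO-based clique algorithm of~\cite{raghavan2003robust}, which likewise does not look at coordinates. I would proceed in four steps. First, compute the initial clique $Q$ by the greedy degree-scan of Section~\ref{sec:initial} in $O(n+m)$ time; by Lemma~\ref{lem:initial} it has size $k=\Omega(n^{1-\alpha})$ with probability $1-2^{-\Omega(n^{1-\alpha})}$. Second, run the linear-time peeling of Section~\ref{sec:degree-reduction} that repeatedly deletes vertices of current degree smaller than $k$; by Lemma~\ref{lem:unique} the resulting \emph{kernel} is independent of the deletion order, and by Lemma~\ref{lem:kernel-size} it has $n'=O(n^{1-\alpha})$ vertices, hence at most $m'=\binom{n'}{2}=O(n^{2-2\alpha})$ edges, with probability $1-2^{-\Omega(n^{1-\alpha})}$. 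Third, note that the kernel is obtained from $G$ by deleting vertices only, so it is still a hyperbolic unit disk graph; by Lemma~\ref{lem:cneeo} it admits a CNEEO, so the $O(m^3+mn^{2.5})$-time clique algorithm of~\cite{raghavan2003robust} applies to it and returns a maximum clique of the kernel in $O(m'^3+m'n'^{2.5})$ time. Fourth, output the larger of $Q$ and the clique returned on the kernel: since $k$ is a lower bound on the size of a maximum clique of $G$ and the peeling discards only vertices of degree smaller than $k$, which cannot belong to any clique of size larger than $k$, this larger clique is a maximum clique of $G$.

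For the running time on the good event, substitute $n'=O(n^{1-\alpha})$ and $m'=O(n^{2-2\alpha})$ into $O(m'^3+m'n'^{2.5})$: the Hopcroft--Karp term is $m'n'^{2.5}=O(n^{2-2\alpha}\cdot n^{2.5(1-\alpha)})=O(n^{4.5(1-\alpha)})$, while the CNEEO-construction term is $m'^3=O(n^{6(1-\alpha)})$, which dominates since $1-\alpha>0$. Adding the $O(n+m)$ spent on preprocessing gives $O(m+n^{6(1-\alpha)})$ with probability $1-2^{-\Omega(n^{1-\alpha})}$. On the complementary, exponentially unlikely event the algorithm is still correct — Lemma~\ref{lem:cneeo} and the algorithm of~\cite{raghavan2003robust} apply to all of $G$ — and runs in the worst-case bound $O(m^3+mn^{2.5})=O(n^6)$. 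Hence, exactly as in the proof of Theorem~\ref{thm:expected}, the expected time is
\begin{align*}
O\!\left(m+n^{6(1-\alpha)}\right)\left(1-2^{-\Omega(n^{1-\alpha})}\right)+O\!\left(n^{6}\right)2^{-\Omega(n^{1-\alpha})}=O\!\left(m+n^{6(1-\alpha)}\right).
\end{align*}

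The only ingredient beyond the proof of Theorem~\ref{thm:expected} is the observation that the kernel still admits a CNEEO, which is immediate from Lemma~\ref{lem:cneeo} together with the fact that peeling removes vertices only; I do not expect this to be a real obstacle. The point that deserves a sentence of care is the accounting: the extra $O(m'^3)$ term contributed by the CNEEO construction — absent from the geometry-aware bound of Theorem~\ref{thm:expected} — is precisely what blows the exponent up from $4.5(1-\alpha)$ to $6(1-\alpha)$, and one must check that on the failure event the fully general $O(m^3+mn^{2.5})$ bound, though larger, is still polynomially bounded and hence annihilated by the $2^{-\Omega(n^{1-\alpha})}$ factor. Beyond that, the argument is a routine re-run of the proof of Theorem~\ref{thm:expected} with a larger black-box clique routine.
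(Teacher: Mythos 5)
Your proposal is correct and takes essentially the same route as the paper: the paper likewise reuses the geometry-free initial-clique and peeling steps, observes that the kernel is still a hyperbolic unit disk graph (only vertices are deleted) and hence admits a CNEEO by Lemma~\ref{lem:cneeo}, and applies the $O(m'^3+m'n'^{2.5})$ algorithm of Raghavan and Spinrad to the kernel with $n'=O(n^{1-\alpha})$ vertices and $m'=O(n^{2-2\alpha})$ edges, the $m'^3$ term giving the $n^{6(1-\alpha)}$ bound. Your write-up is in fact more explicit than the paper's brief justification, particularly in accounting for the failure event.
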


\subparagraph{Heuristics for real-world networks.} 
A main motivation of the study of hyperbolic random graphs is to obtain heuristics for 
analyzing real-world networks. 
Many real-world networks share salient features with hyperbolic random graphs,
but this does not mean that many real-world networks \emph{are} hyperbolic random graphs.
Because the algorithm in Corollary~\ref{thm:robust} is aborted for a graph not admitting a CNEEO,
one cannot expect that this algorithm works correctly for many real-world networks. 
In fact, only a few real-world networks admit CNEEO as we will see in Section~\ref{sec:experiments}. 
That is, for most of real-world networks, the algorithm in Corollary~\ref{thm:robust} is aborted. 

However, in this case, we can obtain a lower bound on the optimal clique sizes, and moreover, we can reduce the size of the graph. 

\begin{lemma}\label{lem:remaining}
    Assume that the algorithm in Theorem~\ref{thm:robust} is aborted at step $i$. 
    Let $(e_1,\ldots,e_{i-1})$ be the ordering we have at stage $i$. For an index $j\leq i$, let $G_j$ be the 
    subgraph of $G$ with edge set $E-\{e_1,\ldots,e_{j-1}\}$.
    Then either  
     a maximum clique of $G_i$ is a maximum clique of $G$, or 
     a maximum clique of the subgraph of $G$ induced by the common neighbors of the endpoints of $e_j$ tin $G_j$ 
     is a maximum clique of $G$ for some index $j<i$. 
\end{lemma}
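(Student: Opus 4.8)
The plan is to fix a maximum clique $K$ of $G$, put $\omega=|K|$, and locate its edges relative to the prefix $(e_1,\dots,e_{i-1})$ produced before the abort. If $\omega\le 1$ then $G$ is edgeless, the prefix is empty, and the $G_i$ alternative holds trivially, so assume $\omega\ge 2$. The one structural fact used throughout is that deleting edges from $G$ cannot create new cliques: any subgraph of $G$ on an edge set $E\setminus\{e_1,\dots,e_{j-1}\}$ has clique number at most $\omega$, so it suffices to exhibit one candidate subgraph (either $G_i$ or a common-neighborhood subgraph) that contains $K$ as a clique, as then its maximum clique has size exactly $\omega$. First I would ask whether $E(K)$ meets $\{e_1,\dots,e_{i-1}\}$. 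If not, every edge of $K$ lies in the edge set of $G_i$, so $K$ is a clique of $G_i$, hence $\omega(G_i)=\omega$ and every maximum clique of $G_i$ is a maximum clique of $G$; this is the first alternative.

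Otherwise, let $e_j=(u_j,v_j)$ be the \emph{first} edge of $K$ occurring in the prefix, so $j<i$ and $e_1,\dots,e_{j-1}\notin E(K)$. Then all edges of $K$ survive in $G_j$, so $K$ is a clique of $G_j$, and in particular every $w\in K\setminus\{u_j,v_j\}$ is a common neighbor of $u_j$ and $v_j$ in $G_j$; thus $K\subseteq N_j\cup\{u_j,v_j\}$ with $N_j=N_{G_j}(u_j)\cap N_{G_j}(v_j)$. Since the greedy construction appended $e_j$ at step $j$, the set $N_j$ induces a co-bipartite subgraph of $G$, and re-inserting the endpoints preserves this: $u_j$ and $v_j$ are adjacent in $G$ to each other and to every vertex of $N_j$, so in the bipartite complement of $G[N_j\cup\{u_j,v_j\}]$ they become isolated vertices, and adjoining isolated vertices keeps a graph bipartite. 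Hence $G[N_j\cup\{u_j,v_j\}]$ is co-bipartite and contains the $\omega$-clique $K$, so its maximum clique---obtainable as a maximum independent set of the bipartite complement via Hopcroft--Karp, exactly as in the Raghavan--Spinrad routine---has size $\omega$ and is a maximum clique of $G$. Equivalently, a maximum clique of the subgraph induced by the common neighbors of $e_j$'s endpoints in $G_j$, together with those two endpoints, is a maximum clique of $G$; this is the second alternative.

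The only real subtlety I anticipate is the bookkeeping around the two endpoints: $G[N_j]$ on its own contains only $K\setminus\{u_j,v_j\}$, so one must reinstate $u_j,v_j$ and check that co-bipartiteness is not lost, which is precisely where the ``$u_j,v_j$ adjacent to all of $N_j$'' observation is used (under a closed-common-neighborhood convention this check is vacuous). I would also note that the abort at step $i$ is never actually invoked; the argument only needs that $(e_1,\dots,e_{i-1})$ is a legitimate prefix of a CNEEO, so that each $N_j$ genuinely induces a co-bipartite subgraph of $G$. Consequently, running the Raghavan--Spinrad clique computation over the $i-1$ common-neighborhood subgraphs of the partial ordering together with a maximum clique of $G_i$ returns a maximum clique of $G$, which both certifies a lower bound on $\omega(G)$ and pins down the few subgraphs one must retain.
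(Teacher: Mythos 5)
Your proof is correct and follows essentially the same route as the paper's: fix a maximum clique $K$, split on whether any edge of $K$ appears in the prefix $(e_1,\ldots,e_{i-1})$, and if so take the \emph{first} such edge $e_j$ and observe that $K$ survives into the common-neighborhood subgraph at step $j$ while $G_i$ handles the other case. The only difference is that you explicitly re-adjoin the endpoints $u_j,v_j$ and check that co-bipartiteness is preserved, a bookkeeping point the paper absorbs by implicitly counting the endpoints among their own common neighbors; your version is the more careful of the two.
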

\begin{proof}
    Let $K$ be a maximum clique of $G$. 
    If an edge of the ordering we have at stage $i$
    has both endpoints in $K$,
    let $e_j$ be the first edge in the ordering among them. 
    In this case, $K$ is a clique of 
    the subgraph $G'$ induced by
    the common neighbors of the endpoints of $e_j$ in $G_j$.
    To see this, 
    observe that all vertices of $K$ are common neighbors of 
    the endpoints of $e_j$, and they are vertices in $G_j$.
    Therefore, a maximum clique of $G'$ has size $|K|$, and thus the lemma follows.

    If no edge of the ordering we have at stage $i$
    has both endpoints in $K$,
    then $K$ is a clique in $G_i$. Therefore, it belongs to the first case described in the statement of the lemma.
    Therefore, the lemma holds for any case. 
\end{proof}

Although we do not have any theoretical bound here, our experiments showed that 
the size of the clique we can obtain is close to the optimal value for many instances.
Details will be described in Section~\ref{sec:experiments}.

\section{Improving Performance through Additional Optimizations}\label{sec:opt}
For implementation, we introduce the following two minor techniques for improving the performance of the algorithms. Although these techniques do not improve the performance theoretically, they improve the performance empirically. 
Recall that our algorithms consist of two phases:
Computing a kernel of size $O(n^{1-\alpha})$, 
and then computing a maximum clique of the kernel.
As the first phase can be implemented efficiently,
we focus on the second phase here.
Again, the second phase has two steps.
With geometric representations, we first compute a CNEEO, and then compute a maximum clique using the CNEEO.
Without geometric representations, we consider every edge, and then compute a maximum clique in the subgraph induced by
the common neighbors of the endpoint of the edge.
The first technique applies to both of the two steps, and 
the second technique applies to the first step. 

\subsection{Skipping Vertices with Low Degree}\label{sec:skip-low}
The main observation of our kernelization algorithm is that, for any lower bound $k$ on the size of a maximum clique, 
a vertex of degree less than $k$ does not participate in a maximum clique. 
The first technique we use in the implementation is to make use of this observation
also for computing a CNEEO, and for computing a maximum clique using the CNEEO. 

While computing a CNEEO, the lower bound $k$ we have does not change; it is the size of the initial clique.
Whenever we access a vertex, we check if its degree is less than $k$.
If so, we remove this vertex from the kernel, and do not consider it any more.
One can consider this as ``lazy deletion.''
Then once we have a CNEEO, we scan the edges in the CNEEO, and for each edge,
we compute a maximum clique of the subgraph defined by the edge.
If it is larger than the lower bound $k$ we have, we update $k$ accordingly.
In this process, whenever we find a vertex of degree less than $k$, we remove it immediately.
Moreover, if the subgraph defined by each edge of CNEEO has vertices less than $k$,
we skip this subgraph as it does not contain a clique of size larger than $k$. 



\subsection{Introducing the Priority of Edges}\label{sec:edge-priority} 
Recall that the second phases consists of two steps: computing a CNEEO, and computing a maximum clique using the CNEEO. In this section, we focus on the first step.

\subparagraph{With Geometric Representations.}
In this case, we use the $O(m'n'^{2.5})$-time algorithm by~\cite{blasius2018cliques} for computing a maximum clique of the kernel with $n'$ vertices and $m'$ edges. 
Although it is the theoretically best-known algorithm, we observed that
computing a maximum clique using a CNEEO is more efficient practically.
By the proof in lemma~\ref{lem:cneeo}, the list of
the edges sorted in the non-decreasing order of their lengths 
is a CNEEO. 
Without using a CNEEO, for each edge $uv$, we have to compute the subgraph of $G$  induced by the common neighbors of $u$ and $v$ in $G$.
On the other hand, once we have a CNEEO, it suffices to consider the subgraph of $G$ induced by the 
common neighbors of $u$ and $v$ in $G_L$, where $G_L$ is the subgraph of $G$ with the edges coming after $uv$ in the CNEEO. 
If $uv$ lies close to the last edge in the CNEEO,
the number of common neighbors of $u$ and $v$ in $G_L$ can be significantly
smaller than the number of their common neighbors in $G$. 
This can lead to the performance improvement. 


\subparagraph{Without Geometric Representations.}
In this case, we compute a CNEEO in a greedy fashion. Starting from the empty sequence, we add the edges one by one in order. For each edge $e$ not added to the current ordering, we check if the common neighbors of the endpoints of $e$ in the kernel is co-bipartite. If an edge passes this test, we add it to the ordering.
It is time-consuming especially when only a few edges can pass the test.
To avoid considering the same edge repeatedly, we use the following observation. 
Once an edge $e$ fails this test, it cannot pass the test unless one of its incident edges are added to the ordering. 
Using this observation, we classify the edges into two sets: active edges and inactive edges. 
In each iteration, we consider the active edges only. Once an edge fails the test, then it becomes inactive.
Once an edge passes the test, we make all its incident edges active. 
In this way, we 
can significantly improve the running time especially for graphs that do not accept CNEEO.

\section{Experimental Evaluation}\label{sec:experiments}
In this section, we evaluate the performance of our algorithm mainly on 
hyperbolic random graphs and real-world networks.

\subparagraph{Environment and data.}
We implemented our algorithm using C++17.
The code were compiled with GNU GCC version $11.3.0$ with optimization flag "-O2".
All tests were run on a desktop with Rygen 7 3800X CPU, 32GB memory, and Ubuntu 22.04LTS.

We evaluate the performance of our algorithm on hyperbolic random graphs and real-world networks. 
For hyperbolic random graphs, we generate graphs using  the open source library GIRGs~\cite{blasius2022efficiently} by setting parameters differently. 
Recall that we have three parameters $n$, $C$ and $\alpha$. 
Here, instead of $C$, we use the average degree, denoted by $\delta$, as a parameter  because $\delta$ can be represented as a function of $C$ and $\alpha$. 
As we consider the average performance of our algorithm, we sampled
100 random graphs for 
fixed parameters $n, \delta$ and $\alpha$, and then calculate the average results (the size of kernels or the running times).

For real-world networks, we use the SNAP dataset~\cite{snapnets}.
It contains directed graphs and non-simple graphs as well. In this case, 
we simply ignore the directions of the edges and interpret all directed graphs  as undirected graphs. 
Also, we collapse all multiple edges into a single edge and remove all loops.


\subsection{Experiment on Hyperbolic Random Graphs: Kernel Size}
\begin{figure}
    \centering
    \includegraphics[width=\linewidth]{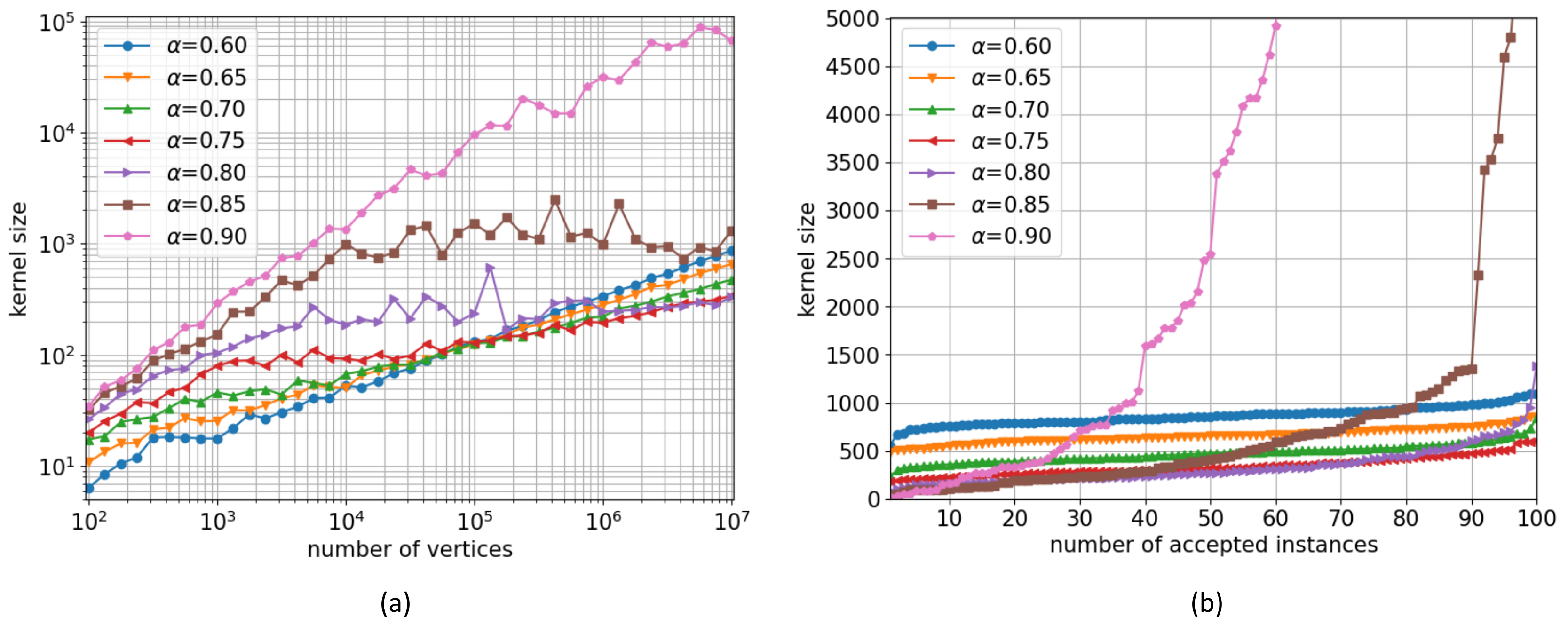} 
    \caption{(a) Comparison of the kernel sizes with varying $\alpha$. 
    Here, $\delta=10$. (b) Cactus plot of the kernel size versus the number of accepted instances for each value of $\alpha$ with fixed $n=10^7$.}
    \label{fig:reduction_alpha}
\end{figure}

We showed that the size of the kernel of the hyperbolic random graph is $O(n^{1-\alpha})$ with probability $1-2^{\Omega(n^{1-\alpha})}$.
In this section, 
we evaluated the tendency on the size of the kernel experimentally 
as $n$, $\alpha$ and $\delta$ change. Here, $\alpha$ controls the power-law exponent, and $\delta$ is the average degree of the graph.

Figure~\ref{fig:reduction_alpha} shows the tendency on the size of the kernel as $\alpha$ changes. 
Here, we fix $\delta=10$. 
Figure~\ref{fig:reduction_alpha}(a) shows a plot of the kernel size versus the number of vertices of a graph on a log–log scale for each value of $\alpha$. 
We generate 100 instances randomly and take the average of their results for each point in the plot.
Figure~\ref{fig:reduction_alpha}(b) shows a cactus plot of the kernel size versus the number of accepted instances for each value of $\alpha$. Here, for a fixed kernel size $k$,
an instance is said to \emph{accepted} if our algorithm returns a kernel of size at most $k$ for this instance. 
Here, we fix $n=10^7$ and $\delta=10$. 


For $\alpha \leq 0.8$, the size of the average kernel decreases for sufficiently large $n$, say $n=10^7$, as $\alpha$ increases in Figure~\ref{fig:reduction_alpha}(a).
Also, the kernel sizes for all instances are concentrated on the average kernel size for each $\alpha \leq 0.8$ in Figure~\ref{fig:reduction_alpha}(b).
This is consistent with Lemma~\ref{lem:kernel-size} stating that 
the kernel size is $O(n^{1-\alpha})$ with high probability. 
However, this fact does not hold for $\alpha > 0.8$ in Figure~\ref{fig:reduction_alpha}(a), 
and the reason for this can be seen in Figure~\ref{fig:reduction_alpha}(b). 
At $\alpha = 0.85$, approximately $10\%$ of instances did not have kernels of size at most $1500$, and at $\alpha = 0.9$, over $60\%$ of instances did not have such kernels. 
Notice that the plot sharply increases when the kernel size exceeds $1500$.
The success probability stated in  Lemma~\ref{lem:kernel-size} is
$1 - 2^{\Omega(n^{1-\alpha})}$, 
which decreases as $\alpha$ increases.
In other words, if $n$ is not sufficiently large, 
it is possible that the success probability $1 - 2^{\Omega(n^{1-\alpha})}$ is not sufficiently large for $\alpha>0.8$. 
That is, if we increase the number of vertices on our experiments, we would 
get the desired tendency on the kernel size for all values $\alpha$.
Nevertheless, at $n=10^7$, our algorithm removes 
a significant number of vertices, 
leaving only $0.01\%$ of vertices at $\alpha=0.85$ and only $1\%$ at $\alpha=0.9$.

\begin{figure}
    \centering
    \includegraphics[width=\linewidth]{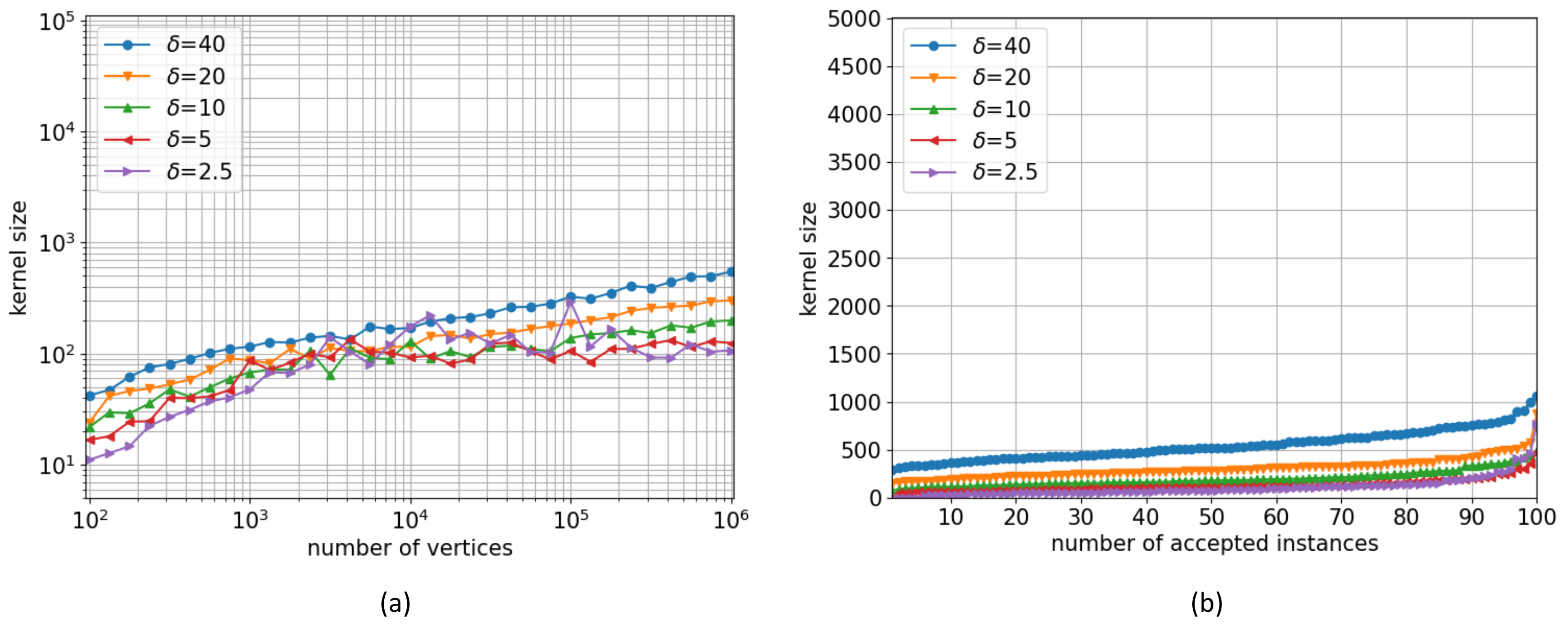}
    \caption{(a) Comparison of the kernel sizes with varying  $\delta$ for fixed 
    $\alpha=0.75$. (b) Cactus plot of the kernel size versus the number of accepted instances for each value of $\delta$ and fixed $n=10^6$.}
    \label{fig:reduction_C}
\end{figure}
Figure~\ref{fig:reduction_C}(a) shows 
that the kernel size increases as $\delta$ increases.
Here, we fix $\alpha=0.75$. 
Figure~\ref{fig:reduction_C}(b) shows a cactus plot of the kernel size versus the number of accepted instances for each value of $\delta$.
The kernel sizes for all instances are concentrated on the average kernel size for each $\delta$. 
This tendency can be explained 
from the proof of 
Lemma~\ref{lem:kernel-size}. 
The constant hidden behind the O-notation in the lemma depends on $C$, and thus 
it also depends on $\delta$. 
By carefully analyzing this constant, we can show that  the average kernel size 
increases as $\delta$  increases (for a fixed $\alpha<1$.) 
Similarly, 
we can show that the success probability 
increases as $\delta$ increases.

\subsection{Experiment on Hyperbolic Random Graphs: Running Time}
\begin{figure}
    \centering
    \includegraphics[width=\linewidth]{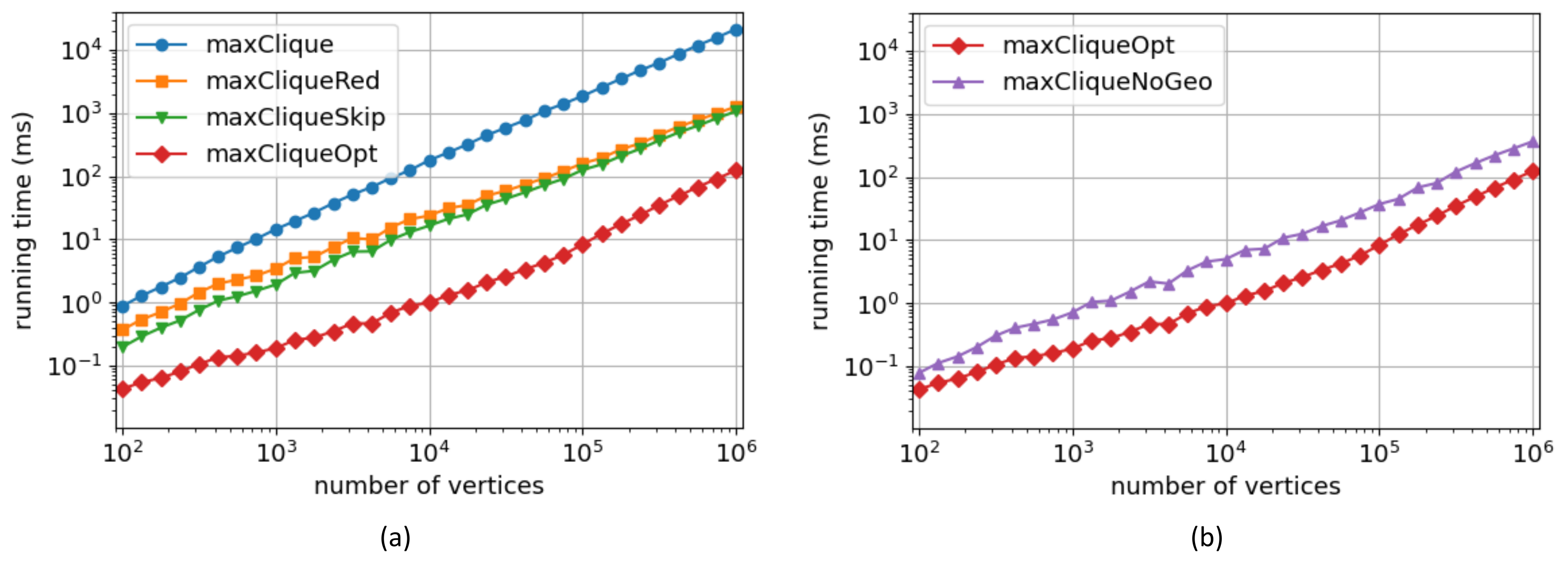}
    \caption{(a) Comparison of running times of different versions of our algorithm. (b) Comparison of running times with and without geometric representations. Here, $\alpha=0.75$ and $\delta=10$.}
    \label{fig:running_time}
\end{figure}

\begin{table}[]
\centering
\begin{tabular}{|l|rrcrrr|r|}
\hline
\multicolumn{1}{|c|}{} &
  \multicolumn{1}{c}{\textsf{INIT}} &
  \multicolumn{1}{c}{\textsf{KERNEL}} &
  \textsf{CNEEO} &
  \multicolumn{1}{c}{\textsf{CONST}} &
  \multicolumn{1}{c}{\textsf{INDEP}} &
  \multicolumn{1}{c|}{\textsf{OTHER}} &
  \multicolumn{1}{c|}{\textsf{TOTAL}} \\ \hline
\texttt{MaxClique}        & \multicolumn{1}{c}{-} & \multicolumn{1}{c}{-} & -                          & 21\ 516.34 & 4\ 470.38 & 8.18 & 25\ 994.90 \\ \hline
\texttt{MaxCliqueRed}     & 15.67                 & 89.27                 & -                          & 1\ 126.95  & 37.92     & 2.88 & 1\ 272.70  \\ \hline
\texttt{MaxCliqueSkip}    & 15.59                 & 88.62                 & -                          & 925.33     & 30.77     & 2.88 & 1\ 063.19  \\ \hline
\texttt{MaxCliqueOpt}     & 15.64                 & 88.61                 & \multicolumn{1}{r}{1.07}   & 12.55      & 1.19      & 2.88 & 121.94     \\ \hline
\texttt{MaxCliqueNoGeo}               & 15.32                 & 89.25                 & \multicolumn{1}{r}{258.45} & 5.82       & 0.93      & 2.96 & 372.74     \\ \hline
\end{tabular}
\caption{
Running time for each operation.
The unit of time is a millisecond. \vspace{-1em}}
\label{table:operation_time}
\end{table}

In this section, we conducted  experiments for evaluating the running times of different versions of our algorithms.
Figure~\ref{fig:running_time} shows a plot of the number of vertices versus the running time of each version of our algorithm. 
Here, we fix $\alpha=0.75$ and $\delta=10$.
Each point of the plot is averaged for $100$ instances.  
Figure~\ref{fig:running_time}(a) shows a plot for 
the algorithm, denoted by $\texttt{MaxClique}$, 
by Bl\"asius et al.~\cite{blasius2018cliques} and 
three different versions of the  algorithm using geometric representations: $\texttt{MaxCliqueRed}$, $\texttt{MaxCliqueSkip}$, 
and $\texttt{MaxCliqueOpt}$.
More specifically, $\texttt{MaxCliqueRed}$ denotes the algorithm described in Section~\ref{sec:robust-algo}.
$\texttt{MaxCliqueSkip}$ denotes the algorithm described in Section~\ref{sec:skip-low} that  
skips low-degree vertices.
$\texttt{MaxCliqueOpt}$ denotes the algorithm described in Section~\ref{sec:edge-priority} 
that introduces priorities of edges. 
As expected, $\texttt{MaxCliqueOpt}$ outperforms all other versions of the algorithms in this experiment.

For a precise analysis, we evaluated the running time of each task for our algorithms and reported them in 
in Table~\ref{table:operation_time}. 
More specifically, the algorithms conduct six tasks:  
\textsf{INIT} denotes
the task of finding an initial solution.
\textsf{KERNEL} denotes the task of finding the kernel.
\textsf{CNEEO} denotes the task of computing a CNEEO.
\textsf{CONST} denotes the task of constructing a co-bipartite graph by considering the common neighbors of the endpoints of each edge.
\textsf{INDEP} denotes the task of  computing a maximum independent set of the complement of a co-bipartite graph. 
\textsf{OTHER} denotes all the other tasks such as the initialization for variables and caches.
\textsf{TOTAL} denotes the entire tasks of our algorithm. 

As expected, 
$\texttt{MaxCliqueRed}$ outperforms
$\texttt{MaxClique}$ significantly. 
However, \textsf{CONST} is still a time-consuming task for $\texttt{MaxCliqueRed}$. 
Thus we focus on optimization techniques for \textsf{CONST} and provides  $\texttt{MaxCliqueSkip}$ and $\texttt{MaxCliqueOpt}$ in Section~\ref{sec:opt}. 
Although $\texttt{MaxCliqueSkip}$
gives a performance improvement, it still takes a significant amount of time in \textsf{CONST} and \textsf{INDEP}. 
$\texttt{MaxCliqueOpt}$ computes a CNEEO by sorting the edges with respect to their lengths. This allows us to manage degree efficiently and apply low-degree skip technique to a larger number of vertices.
This significantly improves the running time of $\texttt{MaxCliqueSkip}$ for \textsf{CONST} and \textsf{INDEP}.
This algorithm runs in about $100$ms even at $n=10^6$, 
exhibiting a performance improvement 
over $200$ times compared to $\texttt{MaxClique}$.

Next, we compared the running times of two algorithms with and without geometrical representations. 
In the case that a geometric representation is given, we use $\texttt{MaxCliqueOpt}$.
If a geometric representation is not given, we use the algorithm in Section~\ref{sec:robust-algo} and denote it by  $\texttt{MaxCliqueNoGeo}$.
In $\texttt{MaxCliqueOpt}$, we can quickly compute a CNEEO by sorting the edges in non-decreasing order of length. 
However, $\texttt{MaxCliqueNoGeo}$ 
computes a CNEEO in a greedy approach which incurs significant overhead. 
Despite this, the performance of $\texttt{MaxCliqueNoGeo}$ in Figure~\ref{fig:running_time}(b) does not show a significant difference compared to $\texttt{MaxCliqueOpt}$, and it even outperforms $\texttt{MaxCliqueSkip}$.

\begin{table}[]
\centering
\resizebox{\textwidth}{!}{%
\begin{tabular}{|l|r|r|r|r|r|r|r|r|r|}
\hline
 &
  \multicolumn{1}{c|}{$|V|$} &
  \multicolumn{1}{c|}{$|E|$} &
  \multicolumn{1}{c|}{$|V_\text{kernel}|$} &
  \multicolumn{1}{c|}{$|V_\text{left}|$} &
  \multicolumn{1}{c|}{$|E_\text{left}|$} &
  \multicolumn{1}{c|}{runtime} &
  \multicolumn{1}{c|}{$\omega_\text{kernel}$} &
  \multicolumn{1}{c|}{$\omega_\text{eval}$} &
  \multicolumn{1}{c|}{$\omega$} \\ \hline
as-skitter    & 1\ 696,415  & 11\ 095,298  & 28\ 787     & 17\ 033  & 693\ 272    & 342.63    & 37  & $\geq$ 63  & 67  \\ \hline
ca-AstroPh    & 18\ 771     & 198\ 050     & 3\ 679      & 0        & 0           & 1.18      & 23  & 57         & 57  \\ \hline
ca-CondMat    & 23\ 133     & 93\ 439      & 13\ 464     & 0        & 0           & 0.07      & 4   & 26         & 26  \\ \hline
ca-HepPh      & 11\ 204     & 118\ 489     & 0           & 0        & 0           & 0.00      & 239 & 239        & 239 \\ \hline
com-amazon    & 334\ 863    & 925\ 872     & 255\ 473    & 0        & 0           & 0.55      & 3   & 7          & 7   \\ \hline
com-dblp      & 317\ 080    & 1\ 049\ 866  & 1\ 716      & 0        & 0           & 0.21      & 26  & 114        & 114 \\ \hline
com-lj        & 3\ 997\ 962 & 34\ 681\ 189 & 1\ 713\ 237 & 126\ 388 & 4\ 587\ 418 & 2\ 316.20 & 7   & $\geq$ 289 & 327 \\ \hline
com-youtube   & 1\ 134\ 890 & 2\ 990\ 443  & 36\ 716     & 7\ 919   & 211\ 051    & 53.48     & 13  & $\geq$ 14  & 17  \\ \hline
Gnutella31    & 62\ 586     & 147\ 892     & 33\ 816     & 0        & 0           & 0.05      & 2   & 4          & 4   \\ \hline
Slashdot0811  & 77\ 360     & 469\ 180     & 14\ 315     & 1\ 503   & 40\ 418     & 7.94      & 10  & $\geq$ 17  & 26  \\ \hline
Slashdot0902  & 82\ 168     & 504\ 229     & 13\ 964     & 1\ 543   & 42\ 215     & 9.24      & 11  & $\geq$ 17  & 27  \\ \hline
soc-Epinions1 & 75\ 879     & 405\ 740     & 9\ 337      & 3\ 717   & 148\ 354    & 26.09     & 10  & $\geq$ 22  & 23  \\ \hline
soc-pokec     & 1\ 632\ 803 & 22\ 301\ 964 & 1\ 252\ 317 & 54\ 101  & 924\ 531    & 288.37    & 4   & $\geq$ 29  & 29  \\ \hline
web-BerkStan  & 685\ 230    & 6\ 649\ 470  & 27\ 058     & 25\ 593  & 589\ 913    & 1\ 224.22 & 18  & $\geq$ 201 & 201 \\ \hline
web-Google    & 875\ 713    & 4\ 322\ 051  & 193\ 406    & 2\ 068   & 20\ 426     & 8.60      & 10  & $\geq$ 44  & 44  \\ \hline
web-NotreDame & 325\ 729    & 1\ 090\ 108  & 51\ 227     & 760      & 8\ 181      & 4.86      & 6   & $\geq$ 155 & 155 \\ \hline
web-Stanford  & 281\ 903    & 1\ 992\ 636  & 32\ 123     & 10\ 721  & 249\ 272    & 177.79    & 18  & $\geq$ 61  & 61  \\ \hline
WikiTalk      & 2,394\ 385  & 4\ 659\ 563  & 70\ 130     & 10\ 421  & 520\ 338    & 447.69    & 7   & $\geq$ 16  & 26  \\ \hline
Wiki-Vote     & 7\ 115      & 100\ 762     & 2\ 913      & 1\ 802   & 62\ 893     & 6.34      & 9   & $\geq$ 13  & 17  \\ \hline
\end{tabular}%
}
\caption{
The performance of our algorithm on the real-world data. 
}\vspace{-3mm}
\label{table:realworld}
\end{table}

\subsection{Experiment on Real-World Dataset}
Our algorithm can heuristically find large cliques for real-world data.
We conducted experiments on several real-world datasets and recorded these results in Table~\ref{table:realworld}.
The unit of the running time is a second.
$|V|$ and $|E|$ denote the numbers of vertices and edges of the input graph, respectively. 
$|V_\text{kernel}|$ denotes the number of vertices of the kernel. 
$|V_\text{left}|$ and $|E_\text{left}|$ denote the numbers of vertices and edges of the remaining graph described in Lemma~\ref{lem:remaining}, respectively.
Also, $\omega_\text{kernel}$ denotes the size of the initial clique,  $\omega_\text{eval}$ denotes the size of the clique computed from our algorithm, and $\omega$ denotes the size of the maximum clique of the graph.
Here, $\omega$ is the correct answer given by the dataset. 
If a given graph accepts a CNEEO, it is theoretically guaranteed that $\omega_\text{eval}$ is the exact solution and $|V_\text{left}| = |E_\text{left}| = 0$.
Otherwise, $\omega_\text{eval}$ is a lower bound
on the exact solution by Lemma~\ref{lem:remaining}, and $G_{\text{left}} = (V_\text{left}, E_\text{left})$ has a maximum clique if $\omega_\text{eval}$ is strictly smaller than the exact solution.

The collaboration networks such as ca-AstroPh, CondMat, HepPh, and com-dblp are one of the well-known scale-free networks. 
These networks accept a CNEEO, allowing us to find the exact maximum clique. 
Moreover, we were able to find a CNEEO considerably faster for these networks than for other graphs in our experiments.
Web graphs such as web-BerkStan, web-Google, web-Notre Dame, and web-Stanford are also one of the well-known scale-free networks. 
Although these graphs do not accept a CNEEO, we were able to reduce the number of vertices and edges significantly, 
and we obtained maximum cliques. 
For the other graphs we tested, we were able to obtain lower bounds that were close to the maximum clique size in most cases, 
and we were able to significantly reduce the size of the graphs.

\section{Conclusion}
We presented improved algorithms for the maximum clique problem on hyperbolic random graphs.
Our algorithms find a sufficiently large initial solution and find a sufficiently small kernel in linear time, which greatly improves the average time complexity and practical running time.
Also we gave the first algorithm for the maximum clique problem on hyperbolic random graphs without geometrical representations. 
Beyond the hyperbolic random graph, we applied these algorithms to real-world dataset and obtained lower bounds close to the optimum solutions for most of instances.

There are two possible directions for further improvement on our algorithms. 
First, we compute a maximum clique of hyperbolic random graphs 
using the framework for computing a maximum clique of unit disk graphs in~\cite{CLARK1990165}.
Recently, Espenant et al.~\cite{espenant2023finding} improved the algorithm~\cite{CLARK1990165} 
and presented an $O(n^{2.5} \log n)$-time algorithm for the maximum clique problem on unit disk graphs. It would be interesting if this technique  can be applied to 
hyperbolic geometry. 
Second, the bottleneck of our algorithm lies in constructing a CNEEO.
Especially, for most of real-world dataset, 
most of the running time is devoted to constructing a CNEEO. 
Thus to speed up the overall performance, this step must be improved. 

\bibliographystyle{plain}
\bibliography{paper}

\begin{thebibliography}{10}

\bibitem{abello2002massive}
James Abello, Mauricio~GC Resende, and Sandra Sudarsky.
\newblock Massive quasi-clique detection.
\newblock In {\em LATIN 2002: Theoretical Informatics: 5th Latin American
  Symposium Cancun, Mexico, April 3--6, 2002 Proceedings 5}, pages 598--612.
  Springer, 2002.

\bibitem{alimonti2000some}
Paola Alimonti and Viggo Kann.
\newblock Some {APX}-completeness results for cubic graphs.
\newblock {\em Theoretical Computer Science}, 237(1-2):123--134, 2000.

\bibitem{anderson2006hyperbolic}
James~W Anderson.
\newblock {\em Hyperbolic geometry}.
\newblock Springer Science \& Business Media, 2006.

\bibitem{blasius2022efficient}
Thomas Bl{\"a}sius, Cedric Freiberger, Tobias Friedrich, Maximilian Katzmann,
  Felix Montenegro-Retana, and Marianne Thieffry.
\newblock Efficient shortest paths in scale-free networks with underlying
  hyperbolic geometry.
\newblock {\em ACM Transactions on Algorithms (TALG)}, 18(2):1--32, 2022.

\bibitem{blasius2022efficiently}
Thomas Bl{\"a}sius, Tobias Friedrich, Maximilian Katzmann, Ulrich Meyer, Manuel
  Penschuck, and Christopher Weyand.
\newblock Efficiently generating geometric inhomogeneous and hyperbolic random
  graphs.
\newblock {\em Network Science}, 10(4):361--380, 2022.

\bibitem{blasius2016hyperbolic}
Thomas Bl{\"a}sius, Tobias Friedrich, and Anton Krohmer.
\newblock Hyperbolic random graphs: Separators and treewidth.
\newblock In {\em 24th Annual European Symposium on Algorithms (ESA 2016)}.
  Schloss Dagstuhl-Leibniz-Zentrum fuer Informatik, 2016.

\bibitem{blasius2018cliques}
Thomas Bl{\"a}sius, Tobias Friedrich, and Anton Krohmer.
\newblock Cliques in hyperbolic random graphs.
\newblock {\em Algorithmica}, 80(8):2324--2344, 2018.

\bibitem{blasius2018efficient}
Thomas Bl{\"a}sius, Tobias Friedrich, Anton Krohmer, and S{\"o}ren Laue.
\newblock Efficient embedding of scale-free graphs in the hyperbolic plane.
\newblock {\em IEEE/ACM transactions on Networking}, 26(2):920--933, 2018.

\bibitem{blasius2021efficiently}
Thomas Bl{\"a}sius, Tobias Friedrich, and Christopher Weyand.
\newblock Efficiently computing maximum flows in scale-free networks.
\newblock In {\em 29th Annual European Symposium on Algorithms (ESA 2021)}.
  Schloss Dagstuhl-Leibniz-Zentrum f{\"u}r Informatik, 2021.

\bibitem{bode2015largest}
Michel Bode, Nikolaos Fountoulakis, and Tobias M{\"u}ller.
\newblock On the largest component of a hyperbolic model of complex networks.
\newblock {\em The Electronic Journal of Combinatorics}, pages P3--24, 2015.

\bibitem{candellero2014clustering}
Elisabetta Candellero and Nikolaos Fountoulakis.
\newblock Clustering and the hyperbolic geometry of complex networks.
\newblock In {\em Algorithms and Models for the Web Graph: 11th International
  Workshop, WAW 2014, Beijing, China, December 17-18, 2014, Proceedings 11},
  pages 1--12. Springer, 2014.

\bibitem{CLARK1990165}
Brent~N. Clark, Charles~J. Colbourn, and David~S. Johnson.
\newblock Unit disk graphs.
\newblock {\em Discrete Mathematics}, 86(1):165--177, 1990.

\bibitem{dubhashi2009concentration}
Devdatt~P Dubhashi and Alessandro Panconesi.
\newblock {\em Concentration of measure for the analysis of randomized
  algorithms}.
\newblock Cambridge University Press, 2009.

\bibitem{espenant2023finding}
Jared Espenant, J.~Mark Keil, and Debajyoti Mondal.
\newblock Finding a maximum clique in a disk graph.
\newblock In Erin~W. Chambers and Joachim Gudmundsson, editors, {\em 39th
  International Symposium on Computational Geometry, SoCG 2023, June 12-15,
  2023, Dallas, Texas, {USA}}, volume 258 of {\em LIPIcs}, pages 30:1--30:17.
  Schloss Dagstuhl - Leibniz-Zentrum f{\"{u}}r Informatik, 2023.

\bibitem{friedrich2018diameter}
Tobias Friedrich and Anton Krohmer.
\newblock On the diameter of hyperbolic random graphs.
\newblock {\em SIAM Journal on Discrete Mathematics}, 32(2):1314--1334, 2018.

\bibitem{gugelmann2012random}
Luca Gugelmann, Konstantinos Panagiotou, and Ueli Peter.
\newblock Random hyperbolic graphs: degree sequence and clustering.
\newblock In {\em Automata, Languages, and Programming: 39th International
  Colloquium, ICALP 2012, Warwick, UK, July 9-13, 2012, Proceedings, Part II
  39}, pages 573--585. Springer, 2012.

\bibitem{kiwi2014bound}
Marcos Kiwi and Dieter Mitsche.
\newblock A bound for the diameter of random hyperbolic graphs.
\newblock In {\em 2015 Proceedings of the Twelfth Workshop on Analytic
  Algorithmics and Combinatorics (ANALCO)}, pages 26--39. SIAM, 2014.

\bibitem{doi:10.1137/18M121201X}
Marcos Kiwi and Dieter Mitsche.
\newblock On the second largest component of random hyperbolic graphs.
\newblock {\em SIAM Journal on Discrete Mathematics}, 33(4):2200--2217, 2019.

\bibitem{snapnets}
Jure Leskovec and Andrej Krevl.
\newblock {SNAP Datasets}: {Stanford} large network dataset collection.
\newblock \url{http://snap.stanford.edu/data}, June 2014.

\bibitem{lu2017finding}
Can Lu, Jeffrey~Xu Yu, Hao Wei, and Yikai Zhang.
\newblock Finding the maximum clique in massive graphs.
\newblock {\em Proceedings of the VLDB Endowment}, 10(11):1538--1549, 2017.

\bibitem{papadopoulos2010greedy}
Fragkiskos Papadopoulos, Dmitri Krioukov, Mari{\'a}n Bogun{\'a}, and Amin
  Vahdat.
\newblock Greedy forwarding in dynamic scale-free networks embedded in
  hyperbolic metric spaces.
\newblock In {\em 2010 Proceedings IEEE Infocom}, pages 1--9. IEEE, 2010.

\bibitem{pattabiraman2013fast}
Bharath Pattabiraman, Md~Mostofa~Ali Patwary, Assefaw~H Gebremedhin, Wei-keng
  Liao, and Alok Choudhary.
\newblock Fast algorithms for the maximum clique problem on massive sparse
  graphs.
\newblock In {\em Algorithms and Models for the Web Graph: 10th International
  Workshop, WAW 2013, Cambridge, MA, USA, December 14-15, 2013, Proceedings
  10}, pages 156--169. Springer, 2013.

\bibitem{raghavan2003robust}
Vijay Raghavan and Jeremy Spinrad.
\newblock Robust algorithms for restricted domains.
\newblock {\em Journal of algorithms}, 48(1):160--172, 2003.

\bibitem{robson1986algorithms}
John~Michael Robson.
\newblock Algorithms for maximum independent sets.
\newblock {\em Journal of Algorithms}, 7(3):425--440, 1986.

\bibitem{rossi2014fast}
Ryan~A Rossi, David~F Gleich, Assefaw~H Gebremedhin, and Md~Mostofa~Ali
  Patwary.
\newblock Fast maximum clique algorithms for large graphs.
\newblock In {\em Proceedings of the 23rd International Conference on World
  Wide Web}, pages 365--366, 2014.

\bibitem{roughgarden2021beyond}
Tim Roughgarden.
\newblock {\em Beyond the worst-case analysis of algorithms}.
\newblock Cambridge University Press, 2021.

\bibitem{von2015generating}
Moritz von Looz, Henning Meyerhenke, and Roman Prutkin.
\newblock Generating random hyperbolic graphs in subquadratic time.
\newblock In {\em Algorithms and Computation: 26th International Symposium,
  ISAAC 2015, Nagoya, Japan, December 9-11, 2015, Proceedings}, pages 467--478.
  Springer, 2015.

\end{thebibliography}

\end{document}